\newtheorem{theorem}{Theorem}[section]
\newtheorem{lemma}[theorem]{Lemma}
\newtheorem{definition}[theorem]{Definition}
\newcommandx*{\bigO}[2][1=@pkling_false]{\mathcal{O}\ifthenelse{\equal{#1}{small}}{\bigl(#2\bigr)}{\left(#2\right)}}
\newcommandx*{\LDAUomicron}[2][1=@pkling_false]{\mathrm{o}\ifthenelse{\equal{#1}{small}}{\bigl(#2\bigr)}{\left(#2\right)}}
\newcommandx*{\LDAUOmega}[2][1=@pkling_false]{\Omega\ifthenelse{\equal{#1}{small}}{\bigl(#2\bigr)}{\left(#2\right)}}
\newcommandx*{\LDAUomega}[2][1=@pkling_false]{\omega\ifthenelse{\equal{#1}{small}}{\bigl(#2\bigr)}{\left(#2\right)}}
\newcommandx*{\LDAUTheta}[2][1=@pkling_false]{\Theta\ifthenelse{\equal{#1}{small}}{\bigl(#2\bigr)}{\left(#2\right)}}
\newcommandx*{\set}[2][2=@pkling_false]{\left\{#1\ifthenelse{\equal{#2}{@pkling_false}}{}{\;\middle|\;#2}\right\}}
\newcommand*{\MSOPT}{OPT}
\newcommand*{\MSBLOCKOPT}{OPT_{BL}}
\begin{document}

\title{Non-Preemptive Scheduling on Machines with Setup Times\thanks{This work was partially supported by the German Research Foundation (DFG)
within the Collaborative Research Centre ``On-The-Fly Computing'' (SFB 901)}~~\thanks{A conference version of this paper has been accepted for publication in the proceedings of the 14th Algorithms and Data Structures Symposium (WADS). The final publication is available at www.link.springer.com~\cite{mmms}.}}
\author{Alexander M\"acker \and Manuel Malatyali \and Friedhelm Meyer auf der Heide \and S\"oren Riechers \\ [0.4em]
	Heinz Nixdorf Institute \& 	Computer Science Department\\
	University of Paderborn, Germany\\[0.2em]
	\{amaecker, malatya, fmadh, sriechers\}@hni.upb.de}

\date{}

\maketitle

\begin{abstract}
Consider the problem in which $n$ jobs that are classified into $k$ types are to be scheduled on $m$ identical machines without preemption. A machine requires a proper setup taking $s$ time units before processing jobs of a given type. The objective is to minimize the makespan of the resulting schedule.
We design and analyze an approximation algorithm that runs in time polynomial in $n, m$ and $k$ and computes a solution with an approximation factor that can be made arbitrarily close to $\nicefrac{3}{2}$. 
\end{abstract}

\section{Introduction}
In this paper, we consider a scheduling problem where a set of $n$ jobs, each with an individual processing time, that is partitioned into $k$ disjoint classes has to be scheduled on $m$ identical machines.
Before a machine is ready to process jobs belonging to a certain class, this machine has to be configured properly. That is, whenever a machine switches from processing a job of one class to a job of another class, a setup taking $s$ time units is required. Meanwhile a machine is not available for processing.
The objective is to assign jobs (and the respective setup operations) to machines so as to minimize the makespan of the resulting non-preemptive schedule. 

The considered problem models situations where the preparation of machines for processing jobs requires a non-negligible setup time. These setups depend on the classes of jobs to be processed (i.e.\ they are class-dependent), however, the required setup time is class-independet. Also, jobs might not be preempted, e.g.\ because of additional high preemption costs. Possible examples of problems for which this model is applicable are (1) the processing of jobs on (re-)configurable machines (e.g.\ Field Programmable Gate Arrays) which only provide functionalities required for certain operations (or jobs of a certain class) after a suitable setup or (2) a scenario where large tasks (consisting of smaller jobs) have to be scheduled on remote machines and it takes a certain (setup) time to make task-dependent data available on these distributed machines.

Surprisingly, although a lot of research has been done on scheduling with setup times, we are not aware of results concerning the considered model. 
This is due to the fact that the motivation for considering setup times are often related to preemption of jobs, which is not true for our model.
We discuss some results on these alternative models in the following section on related work. 
Thereafter, we discuss some preliminaries and two simple algorithms in Section~\ref{sec:pre} that include a fully polynomial time approximation scheme (FPTAS) for the considered problem if the number $m$ of machines is constant and a greedy strategy yielding $2$-approximate solutions.
Section~\ref{sec:approxalg} presents the main contribution of this paper which is an algorithm whose approximation factor can be made arbitrarily close to $\nicefrac{3}{2}$ with a runtime that is polynomial in the input quantities $n, k$ and $m$. 
Finally, in Section~\ref{sec:online} we introduce an online version where jobs arrive over time and shortly discuss how to turn, employing a known technique, our offline algorithm into an online strategy with a competitiveness arbitrary close to~$4$.

\subsection{Related Work}
The scheduling problem considered in this paper is a generalization of the classical problem of scheduling jobs on identical machines without preemption and in which setup times are equal to $0$.
This problem has been extensively studied in theoretical research and PTASs with runtimes that are linear in the number $n$ of jobs are known for objective functions such as minimizing (maximizing) the maximum (minimum) completion time or sum of completion times \cite{alon98,hochbaum87}. If the number $m$ of machines is constant, even FPTASs exist \cite{horowitz76}. 

When setup times are larger than $0$, the problem is usually refered to as scheduling with setup times (or setup costs). 
It has also been studied for quite a long time and there is a rich literature analyzing different models and objective functions. 
Usually models are distinguished by whether or not setup times are job-, machine- and/or sequence-dependent.
For an overview on studied problems and results in this context the reader is refered to detailed surveys on scheduling with setup times \cite{allahverdi99,potts00}.
We discuss some closely related problems in the following.
In \cite{monma93}, Monma and Potts consider a model quite similar to ours but they allow preemption of jobs and setup times may be different for each class.
They design two simple algorithms, one with an approximation factor of at most $\max\{3/2-1/(4m-4), 5/3-1/m\}$ if each class is small (i.e.\ setup time plus size of all jobs of a class are not larger than the optimal makespan), and a second one with approximation factor of at most $2-1/(\lfloor m/2 \rfloor+1)$ for the general case.
Later, Schuurman and Woeginger \cite{schuurman99} improve the result for the case that each class consists of only one job that, together with its setup time, is not larger than the optimal makespan. 
The authors design a PTAS for the case where all setup times are identical and a polynomial time algorithm with approximation factor arbitrary close to $4/3$ for non-identical setup times.

A closely related problem was also studied in another context by Shachnai and Tamir \cite{shachnai01}.
They design a dual PTAS for a class-constrained packing problem. 
In contrast to the basic bin packing problem, in this variant each item belongs to a class and each bin has an upper bound on the number of different classes of which items might be placed in one bin. 

The dual problem of our scheduling problem was studied by Xavier and Miyazawa and is known as class-constrained shelf bin packing. For a constant number of classes, an asymptotic PTAS is known for this problem \cite{xavier08} as well as a dual approximation scheme \cite{xavier09}, i.e.\ a PTAS for our problem if $k$ is constant. 

Very recently, Correa et al.\ \cite{corea14} studied the problem of scheduling splittable jobs on unrelated machines. Here, unrelated refers to the fact that each job may have a different processing time on each of the machines. In their model, jobs may be split and each part might be assigned to a different machine but requires a setup before being processed. For this problem and the objective of minimizing the makespan they show their algorithm to have an approximation factor of at most $1+\phi$, where $\phi \approx 1.618$ is the golden ratio. 

In \cite{divakaran11}, an online variant of scheduling with setup times is considered.
The authors propose a $O(1)$-competitive online algorithm for minimizing the maximum flow time if jobs arrive over time at one single machine. 

\section{Model \& Notation}
We consider a model in which there is a set $J = \{1, \ldots, n\}$ of $n$ independent jobs (i.e.\ there are no precedence constraints for jobs) that are to be scheduled on $m$ identical machines $M = \{M_1, \ldots, M_m\}$.
Each job $i$ is available at the beginning and comes with a \emph{processing time} (or \emph{size}) $p_{i} \in \mathbb{N}_{>0}$.
Additionally, the job set is partitioned into $k$ disjoint classes $C = \{C_1, \ldots, C_k\}$, i.e. $J = \bigcup_{i=0}^k C_i$ and $C_i \cap C_j = \emptyset$ for all $i \neq j$.
Before a job $j \in C_i$ can be processed on a machine, this machine has to be configured properly and afterwards jobs of class $C_i$ can be processed without additional setups until the machine is reconfigured for a class $C_{i'} \neq C_i$.
That is, a setup needs to take place before the first job is processed on a machine and whenever the machine switches from processing a job $j \in C_i$ to a job $j' \in C_{i'}$ with $C_i \neq C_{i'}$.
Such a setup takes $s \in \mathbb{N}_{>0}$ time units and while setting up a machine, it is blocked and cannot do any processing.

Given this setting, the objective is to find a feasible schedule that minimizes the makespan, i.e.\ the maximum completion time of a job, and does not preempt any job, i.e.\ once the processing of a job is started at a machine it finishes at this machine without interruption.

In the following we refer to the overall processing time of all jobs of a class $C_i$ as its \emph{workload} and denote it $w(C_i) \coloneqq \sum_{j \in C_i} p_j$ and we assume that for all $1 \leq i \leq n$ it holds that $w(C_i) \leq \gamma \MSOPT$ for some constant $\gamma$ and $\MSOPT$ being the optimal makespan.
By abuse of notation, by $w(C_i)$ we sometimes also represent (an arbitrary sequence of) those jobs belonging to class $C_i$.
To refer to the class $C_i$ of a job $j \in C_i$, we use a mapping $c: J \to C$ with $c(j)=C_i$ and we say a job $j \in C_i$ forms an \emph{individual class} if $c^{-1}(C_i) = \{j\}$. 
The processing time of the largest job in a given instance is denoted by $p_{max} \coloneqq \max_{1 \leq i \leq n}(p_i)$.
We say a machine is an \emph{exclusive machine} (of a class $C_i$) if it only processes jobs of a single class (class $C_i$).

\section{Preliminaries \& Warm-Up}
\label{sec:pre}
As a preliminary for our approximation algorithm presented in Section~\ref{sec:approxalg}, we need to know the optimal makespan before we can actually compute a schedule fulfilling the desired approximation guarantee concerning its makespan. 
However, this assumption is feasible and justified by the applicability of a common notion known as $\alpha$-relaxed decision procedure \cite{hochbaum87}.

\begin{definition}
Given an instance $I$ and a candidate makespan $T$, an \emph{$\alpha$-relaxed decision procedure} either outputs \texttt{no} or provides a schedule with makespan at most $\alpha \cdot T$.
In case it outputs \texttt{no}, there is no schedule with makespan at most $T$. 
\end{definition}
Using such an $\alpha$-relaxed decision procedure (that runs in polynomial time) to guide a binary search on an interval $[l,u]$ with $\MSOPT \in [l,u]$,
we directly obtain a polynomial time approximation algorithm with approximation factor $\alpha$. 
We can find a suitable interval containing the optimal makespan by applying a greedy algorithm that provides an interval of length $\MSOPT$ as follows.

\begin{lemma}
There is a greedy algorithm with runtime $O(n)$ and approximation factor at most $2$.
\end{lemma}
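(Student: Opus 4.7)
The plan is to apply Graham's classical list-scheduling argument after a linear-time preprocessing step that ensures no single scheduled task is too long. Two elementary lower bounds on $\MSOPT$ are the starting point. Since every job must be preceded on its machine by a setup for its class, $\MSOPT \ge s + p_{max}$. And since the $m$ machines together perform the total processing time $W := \sum_j p_j$ along with at least $k$ class setups (one per non-empty class), $\MSOPT \ge (ks+W)/m$. Let $L := \max\{s+p_{max},\,(ks+W)/m\}$; then $L \le \MSOPT$ and $L$ is computable in one pass over the input.

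In a single further pass, the greedy algorithm bucket-sorts jobs by class and produces a list of ``super-jobs'' each of length at most $L$. A class $C_i$ with $s+w(C_i)\le L$ contributes a single super-job of length $s+w(C_i)$. A class with $s+w(C_i) > L$ is partitioned greedily into $\lceil(s+w(C_i))/L\rceil$ blocks of processing time at most $L-s$ (feasible because $p_{max}\le L-s$), each prefixed by its own setup to form a super-job of length at most $L$. The super-jobs are then assigned one by one to the currently least-loaded machine, which together with bucketing and splitting runs in $O(n)$ time.

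For the analysis, let $M^*$ achieve the greedy's makespan $C_{\max}$ and let $B^*$ be the super-job placed last on $M^*$. Since $M^*$ was the least-loaded machine at that moment, its load just before $B^*$ is at most the average load then, and Graham's bound gives
\[
C_{\max} \le \frac{1}{m}\sum_B |B| + \frac{m-1}{m}|B^*| \le \frac{1}{m}\sum_B |B| + \MSOPT,
\]
using $|B^*| \le L \le \MSOPT$ for the second term. The main obstacle is to show $\frac{1}{m}\sum_B |B|\le\MSOPT$: the extra setups introduced by splitting could in principle inflate the total super-job length. The splitting rule $b_i := \lceil(s+w(C_i))/L\rceil$ combined with $L\ge (ks+W)/m$ yields $\sum_i b_i \le (ks+W)/L + k \le m+k$, so that $\sum_B|B| = s\sum_i b_i + W \le sm+sk+W$; a short case analysis on whether $L$ is determined by the $s+p_{max}$ term or by the $(ks+W)/m$ term absorbs the residual $sm$ contribution against the slack in the lower bound $\MSOPT\ge s+p_{max}$, yielding $\sum_B|B|\le m\cdot\MSOPT$ and therefore $C_{\max}\le 2\,\MSOPT$.
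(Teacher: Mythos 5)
Your two lower bounds and the overall plan (pack each class into setup-prefixed blocks of length at most $L$, then list-schedule the blocks) are reasonable, but the construction of the super-jobs contains a genuine gap: a class with $s+w(C_i)>L$ cannot in general be partitioned into $\lceil(s+w(C_i))/L\rceil$ blocks of processing time at most $L-s$ without splitting a job. The condition $p_{max}\le L-s$ only guarantees that each \emph{individual} job fits in a block, not that this many blocks suffice. Concretely, take $s=10$, one class of five jobs of size $10$ each, and $m=5$; then $L=\max\{20,60/5\}=20$, your rule prescribes $\lceil 60/20\rceil=3$ blocks of processing capacity $L-s=10$, but each such block holds at most one job, so five blocks are needed. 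This also breaks the subsequent counting: the bound $\sum_i b_i\le (ks+W)/L+k$ fails in this example ($5>4$), and even where it holds, your accounting only yields $\frac{1}{m}\sum_B|B|\le \MSOPT+s$, leaving an additive $+s$ (from the ``$+1$'' per split class, worth $sm$ in total) that the asserted ``short case analysis'' must absorb; since $s$ can be much larger than $\MSOPT/m$, it is not clear this can be charged against the slack in $\MSOPT\ge s+p_{max}$, and you do not supply the argument.

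The paper's proof avoids exactly this difficulty by \emph{not} insisting on an integral packing up front. It lays out the sequence $w(C_1),s,w(C_2),s,\ldots,w(C_k)$, cuts it into at most $m$ blocks of length exactly $T$ (allowing jobs to be cut), assigns one block per machine with one extra setup at time $0$, and then repairs each cut job by moving its tail back to the machine where it started. The repair costs at most $s+p_{max}-1\le T-1$ additional time on each machine, giving makespan $<2T\le 2\,\MSOPT$ with no list-scheduling argument and no need to bound the number of extra setups beyond one per machine. If you want to rescue your route, you would need either to allow splits and repair them as the paper does, or to prove a packing lemma that controls both the number of blocks and the total extra setup time simultaneously; as written, neither is established.
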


\begin{proof}
First, observe that $T \coloneqq \max\left(s+p_{max},\left\lceil\frac{ks + \sum_{j=1}^n p_j}{m}\right\rceil\right)$ gives a trivial lower bound on \MSOPT.
Now, consider the sequence \[w(C_1), s, w(C_2), s, \ldots, w(C_k).\]
Note that the length of this sequence is exactly $(k-1)s + \sum_{j=1}^n p_j < mT$.
Thus, if we split it at points $lT, l \in \mathbb{N}$ into blocks of length $T$, we obtain at most $m$ blocks.
We now transform each of these blocks in such a way that we obtain a feasible schedule for all jobs on $m$ machines.
To do so, we need to add at most one setup at time $0$ on each machine.
In case a job is split, we also add the remaining processing time of this job to the machine it started on and remove it from the machine where it should have finished.
Hence, we obtain a valid schedule $S$ with makespan $S \leq T+s+p_{max}-1 < 2T$ which yields $T \leq \MSOPT \leq S < 2T$.
\end{proof}

For the sake of simplicity, we assume in the following that by means of this approach we have guessed $\MSOPT$ correctly and show how to obtain an effective approximation algorithm. 
Particularly, using the presented algorithm within the binary search framework as an $\alpha$-relaxed decision procedure, provides the final result. 

\subsection{Constant Number of Machines}
As a first simple result we show that the problem is rather easy to solve if the number $m$ of machines is upper bounded by a constant.
For this case we show how to obtain an FPTAS, i.e.\ an approximation algorithm that, given any $\varepsilon >0$, computes a solution with approximation factor at most $1+\varepsilon$ and runs in time polynomial in $n, k$ and $\frac{1}{\varepsilon}$.
First of all, note that it is simple to enumerate all possible schedules.
To do so, sort the set of jobs according to classes.
Let $\mathcal{S}_i, 0 \leq i\leq n$, be the set of all possible (partial) schedules for the first $i$ jobs.
Let $\mathcal{S}_0 = \emptyset$ and $j_1, \ldots, j_k$, be the indices $i$ at which there is a change from a job of one class to one of another in the ordered sequence and $j_1 \coloneqq  1$.
To compute $\mathcal{S}_i$, if $i \neq j_1, \ldots j_k$, consider each schedule in $\mathcal{S}_{i-1}$ and for each possible assignment of job $i$ to a machine for which a setup took place for $i$'s class $c(i)$ put the corresponding schedule into $\mathcal{S}_i$ (if the makespan is not larger than $T$, others can be directly discarded).
If $i = j_l$ for some $1 \leq l \leq k$, first compute all $2^k-1$ possible extensions of schedules in $\mathcal{S}_{i-1}$ by setups for $i$'s class $c(i)$ and then proceed as in the case before.
Obviously, choosing a schedule $S \in \mathcal{S}_n$ with minimum makespan yields an optimal solution.

In order to obtain an efficient algorithm from this straightforward enumeration of all possible schedules, we first define some dominance relation that helps to remove schedules during the enumeration process for which there are other schedules that will be at least as good for the overall instance.

\begin{definition}
After computing $\mathcal{S}_i$, a schedule $S \in \mathcal{S}_i$ is \emph{dominated} by $S' \in \mathcal{S}_i$ if
\begin{itemize}
  \item $S$ and $S'$ have the same makespan on the first $m-1$ machines and the makespan of $S'$ on the $m$-th machine is at most as large and
  \item in case that $c(i) = c(i+1)$, in $S$ and $S'$ the same machines are set up for $i$'s class $c(i)$.
\end{itemize}
\end{definition}
Note that by removing dominated schedules directly after the computation of $\mathcal{S}_i$ and before the computation of $\mathcal{S}_{i+1}$, we may reduce the size of $\mathcal{S}_i$ without influencing the best obtainable makespan computed at the end in $\mathcal{S}_n$.
However, we cannot ensure that the $\mathcal{S}_i$'s have a small size.
Thus, we consider the following rounding, which is applied before the enumeration: Round up $s$ and the size $p_j$ of each job $j$ to the next integer multiple of $\nicefrac{\varepsilon T}{(n+k)}$, where $\varepsilon >0$ defines the desired precision of the FPTAS.
As to any machine we assign at most $n$ jobs and $k$ setups, the rounding may introduce an additive error of at most $\varepsilon \cdot T \leq \varepsilon \cdot \MSOPT$.
Additionally, the rounding helps to make sure that each $\mathcal{S}_i$ is not too large after removing dominated schedules.
Due to our dominance definition, there are at most $\nicefrac{(n+k)}{\varepsilon}$ different makespans that may occur in schedules in $S_i$.
Hence, there are at most $2^m \cdot (\nicefrac{n+k}{\varepsilon})^m$ many schedules in $\mathcal{S}_i$ that are not dominated, thus proving the following theorem.

\begin{theorem}
If the number $m$ of machines is bounded by a constant, there is an FPTAS with runtime $O(n/\varepsilon +n \log n)$.
\end{theorem}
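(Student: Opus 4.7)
The plan is to justify the three ingredients of the algorithm sketched above---rounding, dominance pruning, and enumeration---and combine them to derive both the approximation ratio and the runtime claimed in the theorem.

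First, I would verify the approximation factor. Since $s$ and every $p_j$ are rounded up to the next integer multiple of $\varepsilon T/(n+k)$, each rounded value exceeds the original by at most $\varepsilon T/(n+k)$. Any single machine in any feasible schedule processes at most $n$ jobs and incurs at most $k$ setups, so the additive error introduced into its load is bounded by $(n+k)\cdot \varepsilon T/(n+k) = \varepsilon T \leq \varepsilon\cdot \MSOPT$. Consequently, an optimum of the rounded instance has makespan at most $(1+\varepsilon)\MSOPT$, and it suffices to solve the rounded instance optimally.

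Second, I would prove that the pruning step preserves optimality. The key claim is that whenever $S\in\mathcal{S}_i$ is dominated by $S'\in\mathcal{S}_i$, every completion of $S$ to a schedule for the entire job set $J$ can be replayed on $S'$ to yield a completion whose makespan is no larger. The first dominance condition (identical loads on the first $m-1$ machines and a not-larger load on machine~$m$) guarantees that every future job placement feasible for~$S$ remains feasible for~$S'$. The second condition (identical set of machines set up for $c(i)$ whenever $c(i)=c(i+1)$) prevents $S'$ from needing additional setups for the immediately following jobs of class $c(i)$. Hence removing dominated partial schedules after each step cannot destroy all optimal completions.

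Third, I would bound $\abs{\mathcal{S}_i}$ and assemble the runtime. After rounding, every machine load is a nonnegative integer multiple of $\varepsilon T/(n+k)$ bounded by $(1+\varepsilon)T$, so only $\bigO[small]{(n+k)/\varepsilon}$ distinct load values occur. An undominated element of $\mathcal{S}_i$ is determined by the $m-1$ loads on the first machines, the load on machine~$m$, and, when $c(i)=c(i+1)$, a subset of the $m$ machines set up for $c(i)$, giving $\abs{\mathcal{S}_i}\leq 2^m\bigl((n+k)/\varepsilon\bigr)^m$. Since $m$ is constant, each of the $n$ extensions takes time polynomial in $n/\varepsilon$; together with the initial $O(n\log n)$ presort of jobs by class, careful bookkeeping (absorbing $m$-dependent factors into the constant) yields the claimed bound $O(n/\varepsilon + n\log n)$. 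The hard part will be Step~2: making the exchange argument truly watertight at the class-change indices $i=j_l$, where the enumeration guesses all $2^m$ possible setup extensions, and checking that the two dominance conditions together really do rule out exactly those futures in which $S$ could outperform $S'$.
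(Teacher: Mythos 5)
Your proposal follows essentially the same route as the paper: round $s$ and all $p_j$ up to multiples of $\varepsilon T/(n+k)$ to get an additive error of at most $\varepsilon\cdot\MSOPT$, prune dominated partial schedules after each job, and bound the number of undominated states by $2^m\bigl((n+k)/\varepsilon\bigr)^m$. The exchange argument you flag as the remaining hard part is exactly the step the paper also leaves informal, and your justification of the claimed runtime is no less (and no more) complete than the paper's own.
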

\section{A $(\nicefrac{3}{2}+\varepsilon)$-Approximation Algorithm}
\label{sec:approxalg}
In this section, we present the main algorithm of the paper. 
The outline of our approach is as follows: \\
(1) We first identify a class of schedules that features a certain structural property and show that if we narrow our search for a solution to schedules belonging to this class, we will still find a good schedule, i.e.\ one whose makespan is not too far away from an optimal one. \\
(2) We then show how to perform a rounding of the involved job sizes and further transformations and thereby significantly decrease the size of the search space. \\
(3) Finally, given such a (transformed) instance, it will be easy to optimize over the restricted class of schedules studied in (1) to obtain an approximate solution to any given instance.

\subsection{Block-Schedules}
We start by discussing the question how to narrow our study to a class of schedules that fulfill a certain property and still, be able to find a provably good approximate solution.  
Particularly, we focus on block-schedules, which are schedules satisfying a structural property, and which we define as follows.
\begin{definition}
Given an instance $I$, we call a schedule for $I$ \emph{block-schedule} if for all $1 \leq i \leq m$ the following holds: In the (partial) schedule for the machines $M_1, \ldots, M_i$, there is at most one class of which some but not all jobs are processed on $M_1, \ldots, M_i$.
\end{definition}
\noindent Intuitively speaking, in a block-schedule all jobs of a class are executed in a block in the sense that they are assigned to consecutive machines and not widely scattered.

In order to prove our main theorem about block-schedules, we first have to take care of jobs having a large processing time in terms of the optimal makespan. 
Let $L_i = \{j \in C_i : \frac{1}{2}OPT -s < p_j< \frac{1}{2}OPT \}$ be the set of \emph{large} jobs of class $C_i$ and $H_i = \{j \in C_i : p_j\geq \frac{1}{2}OPT \}$ be the set of \emph{huge} jobs of class $C_i$.
Based on these definitions we show the following lemma. 

\begin{lemma}\label{lem:hugeLargeJobsIntoNewClasses}
With an additive loss of $s$ in the makespan we may assume that 
\begin{enumerate}
\item Each huge job forms an individual class,  \label{it:hugeClass}
\item There is a schedule with the property that all large jobs of class $C_i$ are processed on exclusive machines, except (possibly) one large job $q_i \in L_i$, for each $C_i$, and 
\item $q_i = \text{argmin}_{j \in L_i}\{p_j\}$ is the smallest large job in $C_i$ and the machine it is processed on  has makespan at most $\MSOPT$. \label{it:qMachines}
\end{enumerate}
\end{lemma}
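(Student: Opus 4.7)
Starting from an optimal schedule $\sigma^\ast$ with makespan $\MSOPT$, the plan is to apply a sequence of local transformations, each raising any single machine's load by at most the additive budget $s$. The key structural observation is that huge jobs have $p_j \geq \nicefrac{1}{2}\MSOPT$ and large jobs have $p_j > \nicefrac{1}{2}\MSOPT - s$, so that on a single machine in $\sigma^\ast$ the sum of job sizes plus the mandatory setup times rules out (a)~two huge jobs, (b)~a huge job together with any large job, and (c)~two large jobs from different classes. Hence only same-class pairs of large jobs may share a machine, and every machine that hosts a huge job is disjoint from every machine that hosts a large job.

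For property~1, I would insert an additional setup of length $s$ immediately before each huge job $j \in C_i$, splitting the surrounding $C_i$-block if necessary, and declare $\{j\}$ a new singleton class. Each machine carries at most one huge job by~(a), so the load increase is at most~$s$ per machine, and the touched machines are disjoint from those hosting large jobs. For property~2, I would process each class $C_i$ separately by iterating the following swap. While there exist two distinct non-exclusive machines $M_a$ and $M_b$ each holding one large $C_i$-job $a$ resp.~$b$ (noting that by~(c) a non-exclusive machine holds at most one large job), rebuild them as a $C_i$-exclusive machine $M_a'$ carrying exactly $\{\text{setup},a,b\}$ with load $s+p_a+p_b < \MSOPT+s$, together with a machine $M_b'$ absorbing all remaining $C_i$-jobs (as one block) and all non-$C_i$ work from both originals. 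Using the makespan bounds on $M_a$ and $M_b$ in $\sigma^\ast$ together with $p_a+p_b > \MSOPT - 2s$, a short calculation shows $M_b'$ also has load below $\MSOPT + s$. Crucially, $M_b'$ carries no large jobs at all (any non-$C_i$ job on the originals was forced to be small), so $M_b'$ cannot participate in any subsequent swap for another class; hence each machine is rebuilt at most once during the whole large-job phase.

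After the swap phase, at most one non-exclusive machine $M^\ast$ per class hosts a large $C_i$-job; if that job is some $a \ne q_i$, I exchange $a$ with $q_i$ between $M^\ast$ and $q_i$'s current host. Since $p_{q_i} \le p_a$ the load on $M^\ast$ only decreases and, as $M^\ast$ was never rebuilt during the swap phase, stays at most $\MSOPT$, which is property~3. The load on $q_i$'s former machine is still below $\MSOPT + s$ because that machine is $C_i$-exclusive with at most two large $C_i$-jobs whose sum is below $\MSOPT$. The main obstacle throughout is the bookkeeping that prevents the additive overshoot from compounding across the huge-job step, the large-job swaps for many distinct classes, and the final $q_i$-exchange; the disjointness observations above---namely that huge- and large-hosting machines are separate, that the remainders $M_b'$ carry no large jobs, and that the $q_i$-exchange only redistributes two same-class large jobs---are precisely what ensures every machine sees the overshoot at most once.
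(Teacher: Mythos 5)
Your proposal follows the paper's own proof almost step for step: the same size arguments rule out a huge job sharing a machine with another huge or large job and rule out two large jobs of different classes on one machine; property~1 is the same one-extra-setup-per-machine argument; property~2 is the same pairing of singleton large-job machines into one exclusive machine plus one remainder machine that carries no large jobs (which is exactly what guarantees no machine is rebuilt twice); and property~3 is the same final exchange of $q_i$ with the unpaired leftover job. The load calculations you sketch for $M_a'$ and $M_b'$ are correct.

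The one place where you are weaker than the paper is property~3 in the case where the swap phase for class $C_i$ leaves \emph{no} unpaired machine $M^\ast$. Your swaps do not privilege $q_i$, so $q_i$ may be absorbed into a freshly built exclusive machine of load $s+p_{q_i}+p_b$, which can be as large as $\MSOPT+s-2 > \MSOPT$; since there is then no $M^\ast$ to exchange with, your argument never establishes that the machine processing $q_i$ has makespan at most $\MSOPT$, which is exactly the bound needed later when $q_i$ is promoted to an individual class (costing one more setup on that machine). The paper addresses this by stipulating that when $q_i$ is the only large job on its machine in the optimal schedule, the grouping is performed ``without shifting $q_i$ to another machine,'' so that $q_i$'s machine retains its original load of at most $\MSOPT$. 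You should add the analogous rule to your swap phase --- e.g., designate $q_i$'s machine as the machine to be left unpaired whenever $q_i$ starts as a singleton --- and then verify that this choice does not leave a second non-exclusive machine holding a large $C_i$-job, since property~2 permits only one exception per class.
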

\begin{proof}

We prove the lemma by showing how to establish the three properties by transformations of the given instance $I$ and an optimal schedule $S$ for $I$ with makespan $\MSOPT$. 
To establish the first property, transform $I$ into $I'$ by putting each job $j \in H_i$ into a new individual class, for each class $C_i$.
Because any machine processing such a huge job $j$ cannot process any other huge or large job due to their definitions, the transformation increases the makespan of any machine by at most $s$.

Next, we focus on the second property. 
In $S$ no machine can process two large jobs of different classes.
Hence, we distinguish the following two cases: A machine processes one large job or a machine processes at least two large jobs.
We start with the latter case and consider any machine that processes at least two large jobs of a class $C_i$.
Because these two jobs already require at least $2\left\lceil\nicefrac{(OPT+1)}{2}-s\right\rceil+s\geq OPT-s+1$ time units including the setup time, no job of another class can be processed and thus, this machine already is an exclusive machine.
On the other hand, if a machine $M_p$ only processes one large job $j \in C_i$, we can argue as follows.
The machine $M_p$ works on $j$ for at least $\left\lceil\nicefrac{(OPT+1)}{2}\right\rceil$ time units (including the setup).
Thus, the remaining jobs and setups processed by $M_p$ can have a size of at most $\left\lfloor\nicefrac{(OPT-1)}{2}\right\rfloor$.
If there is still another machine processing a single large job of $C_i$, we can exchange these jobs and setups with this large job and both involved machines have a makespan of at most $OPT+s$.
Also, the machine from which the large job was removed does not contain any huge or large jobs anymore ensuring there is no machine where this process can happen twice.
We can repeat this procedure until all (but possibly one) large jobs are paired so that the second property holds since no machine is considered twice.

Finally, to establish the third property, we can argue as follows: 
If the smallest large job $q_i$ is the only large one on a machine in the schedule $S$, we can do the grouping just described without shifting $q_i$ to another machine satisfying the desired bound on the makespan.
If $q_i$ is already processed on a machine together with another large job, we may pair the remaining jobs but (possibly) one (one that is not processed together with another large job on a machine). In case there is such a remaining unpaired job, we finally exchange $q_i$ with the unpaired job.
The resulting schedule fulfills the desired properties.
\end{proof}

We now put the smallest large job $q_i$ of each class $C_i$ into a new individual class. 
Based on the previous result, there is still a schedule with makespan at most $\MSOPT+s$ for the resulting instance.

In the next lemma, we directly deduce that there is a block-schedule with makespan at most $OPT+s$ if we allow some jobs to be split, i.e.\ some jobs are cut into two parts that are treated as individual jobs and processed on different machines. 
To this end, fix a schedule $S$ for $I$ fulfilling the properties of Lemma~\ref{lem:hugeLargeJobsIntoNewClasses}. 
By $\tilde M$ denote the exclusive machines according to schedule $S$ and by $\tilde C_i$ the class $C_i$ without those jobs processed on machines belonging to $\tilde M$. 

\begin{lemma}\label{lem:almostBlockSchedule}
Given the schedule $S$ fulfilling the properties of Lemma~\ref{lem:hugeLargeJobsIntoNewClasses}, 
there is a schedule $S'$ with makespan at most $OPT+s$ with the following properties:
\begin{enumerate}
  \item A machine is exclusive in $S'$ if and only if it belongs to $\tilde M$ and the partial schedule of these machines is unchanged.
  \item When removing the machines belonging to $\tilde M$ and their jobs from $S$, we can schedule the remaining jobs on the remaining machines such that
  \begin{enumerate}
    \item The block-property holds and
    \item only jobs with size at most $\frac{1}{2}OPT-s$ are split.
  \end{enumerate}
\end{enumerate}
\end{lemma}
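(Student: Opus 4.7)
The plan is to construct $S'$ in two parts matching the two claimed properties. For property~(1), I keep every machine in $\tilde M$ together with its jobs from $S$ unchanged; by Lemma~\ref{lem:hugeLargeJobsIntoNewClasses} these machines already satisfy the required makespan bound $\MSOPT + s$. For property~(2), I reschedule the remaining jobs $\bigcup_i \tilde C_i$ on the $m' := m - |\tilde M|$ non-exclusive machines into a block-schedule. Recall that after the transformation preceding this lemma, every non-exclusive class is either a \emph{small} class (all jobs of size at most $\tfrac{1}{2}\MSOPT - s$) or a singleton class $\{q_i\}$ holding one large job.

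For the block-schedule, I concatenate the classes in a fixed but arbitrary order, prepending each with its setup to form the sequence
\[s,\ w(\tilde C_1),\ s,\ w(\tilde C_2),\ \ldots,\ s,\ w(\tilde C_{\tilde k}).\]
I then cut this sequence into $m'$ consecutive pieces of length at most $T := \MSOPT + s$ and assign each piece in order to a non-exclusive machine. Because consecutive classes in the concatenation go to consecutive machines, the block-property holds by construction. Cuts are chosen close to $T,\, 2T,\, \ldots,\, (m'-1)T$ with two rules: whenever a natural cut would fall strictly inside the workload of a singleton class $\{q_i\}$, I shift it to the class boundary immediately flanking $\{q_i\}$ so as to avoid splitting the large job $q_i$; and whenever a cut can be placed at a class boundary without violating the block length, I prefer it there, since such a cut induces no additional setup on the following machine. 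Within-class cuts, which then only occur inside small classes, may split a small job and force the following machine to begin with one extra setup for the continuing class.

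The main obstacle is arguing that each resulting piece, including any extra setup induced by a within-class cut, still fits within length $T$. Let $W$ denote the total workload and $K$ the number of setups on the non-exclusive machines in $S$. The makespan bound on $S$ gives $W + K s \leq m'T$, and since every non-exclusive machine in $S$ processes at least two classes, $K \geq 2m'$; hence the concatenation has slack at least $(K - \tilde k)s$ with respect to $m'T$. I plan to close the accounting by combining two facts: there are at most $m'-1$ within-class cuts, each costing one extra setup $s$; and the $\tilde k - 1$ class boundaries, which by $K \geq 2m'$ are plentiful relative to $m'$, can be used as cut points whenever the slack is tight. A careful but routine case distinction around the positions of the singleton $\{q_i\}$-classes, exploiting that a singleton class has total length at most $s + \tfrac{1}{2}\MSOPT$, completes the argument.
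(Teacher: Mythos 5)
Your high-level strategy --- keep the exclusive machines fixed, concatenate the remaining classes with their setups into one sequence, and cut it into $m'$ pieces --- is the same as the paper's. But the accounting you defer to a ``routine case distinction'' is exactly where the work is, and as stated it does not close. Your slack budget requires the number of within-class cuts (up to $m'-1$, each costing an extra setup $s$) to be at most $K-\tilde k$, i.e.\ you need $K \geq \tilde k + m' - 1$. The bound $K \geq 2m'$ you invoke gives this only when $\tilde k \leq m'+1$, which need not hold, and ``class boundaries are plentiful'' does not help when a single class of workload up to $\gamma\,\MSOPT$ spans several blocks with no boundary in sight. The correct count is a connectivity argument on the machine--class incidence structure: within a \emph{connected component} of the graph whose vertices are the non-exclusive machines and whose edges join machines sharing a class, the number of setups is at least $(\text{\#classes}) + (\text{\#machines}) - 1$. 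Globally this degrades to $\tilde k + m' - c$ for $c$ components, losing $(c-1)s$; this is precisely why the paper runs the entire argument separately on each connected component.

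The second gap is your classification of the non-exclusive classes. After Lemma~\ref{lem:hugeLargeJobsIntoNewClasses}, \emph{huge} jobs also form singleton classes, and a machine carrying a huge job plus some small jobs is not exclusive, so these singletons (of length up to $s + \MSOPT - s = \MSOPT$) appear in your concatenation. Your ``shift the cut to a flanking boundary'' rule then has to move a cut by up to $\MSOPT$, which overloads a neighbouring piece and can cascade. The paper avoids both problems at once: it pulls the $r$ huge/large singleton jobs $h_1,\ldots,h_r$ out of the sequence entirely, cuts the remaining (purely small-job) sequence into blocks of \emph{pre-shrunk} lengths $\MSOPT - w(h_i) - s$ for the first $r$ blocks and $\MSOPT$ thereafter, and then places $h_i$ plus its setup into block $i$. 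The averaging inequality (driven by the component-wise setup count) shows at most $m'$ blocks arise, each of final length at most $\MSOPT+s$, and only small jobs are ever split. You would need to import both the component decomposition and this pre-allocation of space for the singletons to make your plan go through.
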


\begin{proof}
Remove machines belonging to $\tilde M$ and the jobs scheduled on them from the schedule $S$ obtaining $\tilde S$.
We now show that there is a schedule $S'$ with the desired properties.
Similar to \cite{schuurman99} consider a graph $G=(V,E)$ in which the nodes correspond to the machines in $\tilde S$ and there is an edge between two nodes if and only if in $\tilde S$ the respective machines process jobs of the same class.
We argue for each connected component of $G$. 
Let $m'$ be the number of nodes/machines in this component.
Furthermore, let $C'=\{C'_1, \ldots C'_l\}$ be the set of classes processed on these machines without those formed by single huge or large jobs and $H=\{h_1,\ldots ,h_r\}$ be the set of jobs processed on these machines that are either huge jobs or large jobs forming individual classes. 
Note that $r \leq m'$ since all jobs of $H$ must be processed on different machines in $\tilde S$.
By an averaging argument we know $ \MSOPT + s \geq \frac{1}{m'}\left(\sum_{i=1}^l w (\tilde C'_i) + \sum_{i=1}^r w(h_i)+ (l+r+m'-1)s\right)$
and hence,
\begin{equation}
  \sum_{i=1}^l w(\tilde C'_i) + (l-1)s \leq (m'-r)\MSOPT+\sum_{i=1}^r(\MSOPT- w(h_i)-s).
\end{equation}
Consider the sequence $w(\tilde C'_1), s, w(\tilde C'_2), s,\ldots, s, w(\tilde C'_l)$
of length $\sum_{i=1}^l  w(\tilde C'_i) + (l-1)s$ and split it from the left to the right into blocks of length 
$\MSOPT -w(h_1)-s, \ldots, \MSOPT -w(h_r)-s$,
followed by blocks of length $\MSOPT$. 
Note that each block has non-negative length.
By equation (1) we obtain at most $m'$ blocks and by adding a setup to each block and the jobs $h_i$ plus setup to the first $r$ blocks, we can process each block on one machine.

Consequently, if we apply these arguments to each connected component and add the removed exclusive machines again, we have shown that there is a schedule $S'$ with makespan at most $\MSOPT +s$ satisfying the required properties of the lemma.
\end{proof}

Lemma~\ref{lem:almostBlockSchedule} proves the existence of a schedule that almost fulfills the properties of block-schedules, whose existence is the major concern in this section. 
However, it remains to show how to handle jobs that are split as we do not allow splitting or preemption of jobs and how to place exclusive machines belonging to $\tilde M$, which are not taken care of by the previous lemma, into the obtained schedule in order to yield a block-schedule. 

To simplify the description in the following, when we say we place an exclusive machine $M_i$ before machine $M_j$, we think of a re-indexing of the machines such that the ordering of machines other than $M_i$ and $M_j$ stays untouched but now the new indices of $M_i$ and $M_j$ are consecutive. 
Also, a job $j$ is \emph{started} at the machine that processes (parts of) $j$ and has the smallest index among all those processing $j$. 
A class $C_i$ is processed at the end (beginning) of a machine if there is a job $j \in C_i$ that is processed as the last job (as the first job) on $M_j$.

\begin{lemma}
\label{lem:blockSchedule}
A schedule fulfilling the properties of Lemma~\ref{lem:almostBlockSchedule} can be transformed into a block-schedule with makespan at most $\frac{3}{2}OPT$.
\end{lemma}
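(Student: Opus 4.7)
The plan is to start from the schedule $S'$ guaranteed by Lemma~\ref{lem:almostBlockSchedule} and apply two successive transformations: first, eliminating the splits introduced by the almost-block construction, and second, splicing the exclusive machines of $\tilde M$ into the machine ordering so that the global block property holds.

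For the un-splitting step, recall that each split job has size at most $\frac{1}{2}\MSOPT - s$ and straddles exactly two consecutive non-exclusive machines. I would consolidate every split job onto one of these two machines, using a single fixed direction, say always onto its right-hand machine. Its class is already set up on the receiving side, so no new setup is incurred. Under this uniform rule, every non-exclusive machine loses the split job at its right boundary (if any) and absorbs at most one split job from its left boundary, so its load grows by at most $\frac{1}{2}\MSOPT - s$. Together with the bound $\MSOPT + s$ from Lemma~\ref{lem:almostBlockSchedule} on the non-exclusive makespans in $S'$, every such machine ends up with makespan at most $(\MSOPT + s) + (\frac{1}{2}\MSOPT - s) = \frac{3}{2}\MSOPT$. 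Removing splits can only reduce the set of boundary spans, so the block property among the non-exclusive machines is retained.

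For the integration step, each machine of $\tilde M$ already has makespan at most $\MSOPT + s \leq \frac{3}{2}\MSOPT$, so the makespan bound is automatic; the task is purely structural. The guiding idea is that for every class $C$ owning exclusive machines, I would place all of $C$'s exclusive machines consecutively and adjacent to $C$'s non-exclusive occurrences in the current ordering. If $C$'s non-exclusive extent covers at least two consecutive machines $M_{a_C},\ldots,M_{b_C}$, then the block property of the non-exclusive part forces $C$ to be the unique partial class at any cut internal to this extent, so inserting the exclusive $C$-machines at such a cut preserves the property globally. Processing the classes one by one (always inserting into the current, already partially integrated ordering) will then build up a single block-schedule over all $m$ machines.

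The main obstacle is the degenerate case in which $C$'s entire non-exclusive presence sits on a single machine $M_j$ and both adjacent cuts $(M_{j-1},M_j)$ and $(M_j,M_{j+1})$ are already spanned by other classes; a direct insertion would then create a second partial class at the new cut. My plan to handle this is to relocate $C$'s non-exclusive jobs on $M_j$ onto one of $C$'s exclusive machines, which is already set up for $C$ and, carrying makespan at most $\MSOPT$, has at least $\frac{1}{2}\MSOPT$ of free capacity before hitting $\frac{3}{2}\MSOPT$. By the standing assumption $w(C_i) \leq \gamma\MSOPT$ the relocated workload fits into this slack for the constant $\gamma$ under consideration. After the move, $C$ has no non-exclusive presence and its exclusive machines can be grouped and inserted at any boundary that no other class spans. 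Iterating this argument over all classes with exclusive machines produces a valid block-schedule of makespan at most $\frac{3}{2}\MSOPT$.
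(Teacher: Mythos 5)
Your overall two-phase plan---first absorbing the split jobs, then splicing the exclusive machines into cuts that are internal to a class's extent---matches the spirit of the paper's argument, and both the un-splitting step (split jobs have size at most $\frac{1}{2}\MSOPT-s$, so one machine absorbs at most that much extra load) and the insertion for classes whose non-exclusive extent covers at least two machines are sound. The genuine gap is in your ``degenerate'' case, which is in fact the crux of the lemma and arises routinely (e.g.\ whenever un-splitting collapses a class's extent to one machine): a class $C_i$ with exclusive machines has all of $\tilde C_i$ on a single machine $M_j$, wedged between a class spanning the cut to $M_{j-1}$ and one spanning the cut to $M_{j+1}$. Your fix---relocating $\tilde C_i$'s jobs from $M_j$ onto one of $C_i$'s exclusive machines---only respects the $\frac{3}{2}\MSOPT$ bound when that workload is at most about $\frac{1}{2}\MSOPT$, since the exclusive machine may already carry close to $\MSOPT$. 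But $w(\tilde C_i)$ restricted to $M_j$ can be as large as roughly $\MSOPT-2s$, and the standing assumption $w(C_i)\leq\gamma\MSOPT$ does not rescue the step: $\gamma$ is an arbitrary constant, not assumed to be at most $\nicefrac{1}{2}$, so ``fits into this slack for the constant $\gamma$ under consideration'' is unjustified. As written, the receiving exclusive machine can end up well above $\frac{3}{2}\MSOPT$.

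The paper resolves exactly this situation with a complementary move, driven by a case distinction you are missing: whether $\tilde C_i$ (with its setup) occupies at most or more than $\frac{1}{2}\MSOPT$ of $M_j$. In the first case your relocation onto an exclusive machine is valid. In the second case $C_i$ takes up more than half of $M_j$, which forces the class $C_{i'}$ processed at the \emph{beginning} of $M_j$ to have workload at most $\frac{1}{2}\MSOPT-s$ there; one therefore moves $C_{i'}$ (not $C_i$) off the left boundary of $M_j$---onto $M_{j-1}$, where it is already set up, or onto its own exclusive machines---thereby freeing the cut $(M_{j-1},M_j)$ so that $C_i$'s exclusive machines can be inserted there while every machine gains at most $\frac{1}{2}\MSOPT-s$ of extra load. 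Without this (or an equivalent) argument for the large-workload subcase, your construction does not achieve the claimed makespan bound.
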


\begin{proof}
Consider an arbitrary class $C_i$.
We distinguish three cases depending on where the jobs of $C_i$ are placed in the schedule $S'$ according to the proof of the previous lemma.

\begin{enumerate}[{(1)}]
 \item There is a job in $\tilde C_i$ that is split among two machines $M_j$ and $M_{j+1}$. \label{lem:blockSchedule:proof:cases:1}
 \item There is no job in $\tilde  C_i$ that is split.\label{lem:blockSchedule:proof:cases:2}
 \item $\tilde C_i = \emptyset$.\label{lem:blockSchedule:proof:cases:3}
\end{enumerate}

In case (\ref{lem:blockSchedule:proof:cases:1}) there is a job in $\tilde C_i$ that is split, i.e.\ one part is processed until the completion time of $M_j$ and one from time $s$ on by $M_{j+1}$. 
Hence, we can simply place all machines of $C_i$ between $M_j$ and $M_{j+1}$.
Since jobs that are split have size at most $\frac{1}{2}OPT-s$, we can process any split job completely on the machine on which it was started increasing its makespan to at most $\frac{3}{2}OPT$.
We repeat this process as long as there are jobs with property (\ref{lem:blockSchedule:proof:cases:1}) left.
Note that for each class $C_i$, after having finished case (\ref{lem:blockSchedule:proof:cases:1}), there is no split job left. 

In case (\ref{lem:blockSchedule:proof:cases:2}), we distinguish two cases.
If the jobs in $\tilde  C_i$ have an overall size of at most $\frac{1}{2}OPT$ (including setup), there either is no exclusive machine of $C_i$ and hence no violation of the block-property, or we can process the jobs on an exclusive machine of $C_i$ increasing its makespan to at most $\frac{3}{2}OPT$.
If the jobs have an overall size of more than $\frac{1}{2}OPT$, we distinguish whether $\tilde C_i$ is processed at the end or beginning of a machine $M_j$ or not. 
In the positive case, we can simply place any exclusive machines of $C_i$ behind or before machine $M_j$. 
If $\tilde C_i$ is not processed at the end or beginning of a machine $M_j$, there must be a second class $\tilde C_{i'}$ that is processed at the beginning and a third class $\tilde C_{i''}$ that is processed at the end of machine $M_j$. 
Note that consequently the workload of $\tilde C_{i'}$ processed on $M_j$ cannot be larger than $\frac{1}{2}\MSOPT-s$.
We can perform the following steps on the currently considered machine $M_j$:
\begin{enumerate}[{1.}]
 \item Move all jobs from the class $C_{i'}$ that is processed at the beginning of $M_j$ to machine $M_{j-1}$ if $C_{i'}$ is also processed at the end of $M_{j-1}$, thus only increasing the makespan of $M_{j-1}$ by at most $\frac{1}{2}\MSOPT-s$.
 \item Move all other jobs processed before some workload of $C_i$ to one of their exclusive machines, if they exist.
 \item Shift all the workload $w(\tilde C_i)$ to time $0$ on machine $M_j$ and shift other jobs to a later point in time.
 \item Place all exclusive machines of $C_i$ in front of $M_j$.
\end{enumerate}

In case (\ref{lem:blockSchedule:proof:cases:3}), there are only exclusive machines. 
Such machines can simply be placed behind all other machines. 

These steps establish the block-schedule property and no jobs are split anymore. 
Also note that each machine gets an additional workload of at most $\frac{1}{2}\MSOPT-s$ without requiring additional setups.
Thus, the required bound on the makespan holds, proving the lemma.
\end{proof}

\begin{theorem}\label{thm:blockSchedule}
Given an instance $I$ with optimal makespan $\MSOPT$, there is a transformation to $I'$ and a block-schedule for $I'$ with makespan at most $\MSBLOCKOPT \coloneqq \min\{\MSOPT+p_{max}-1, \frac{3}{2}\MSOPT\}$. It can be turned into a schedule for $I$ with makespan not larger than $\MSBLOCKOPT$.
\end{theorem}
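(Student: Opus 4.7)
The plan is to define $I'$ as in Lemma~\ref{lem:hugeLargeJobsIntoNewClasses} and the paragraph following it---each huge job and each ``smallest large job'' $q_i$ of each class is moved into its own individual class---and to construct a block-schedule for $I'$ in two different ways, one giving makespan $\tfrac{3}{2}\MSOPT$ and one giving $\MSOPT+p_{max}-1$. Returning the better of the two then yields the claimed bound $\MSBLOCKOPT$.

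For the $\tfrac{3}{2}\MSOPT$ branch I would simply chain the three lemmas of this section. Lemma~\ref{lem:hugeLargeJobsIntoNewClasses} turns an optimal schedule for $I$ into a schedule for $I'$ of makespan $\MSOPT+s$ with all large jobs correctly placed on exclusive machines. Lemma~\ref{lem:almostBlockSchedule} then produces an almost-block schedule of the same makespan $\MSOPT+s$, in which only jobs of size at most $\tfrac{1}{2}\MSOPT-s$ are split across two consecutive non-exclusive machines. Finally, Lemma~\ref{lem:blockSchedule} un-splits those jobs (by completing each split job on the machine where it was started) and inserts the exclusive machines at appropriate positions, yielding a genuine block-schedule of makespan at most $\tfrac{3}{2}\MSOPT$.

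For the $\MSOPT+p_{max}-1$ branch I would apply the greedy construction from the preliminary lemma in Section~\ref{sec:pre} directly to $I'$: cutting the concatenation $w(C_1),s,w(C_2),s,\ldots,w(C_k)$ into slabs of length $\MSOPT$ produces a schedule in which each machine processes a contiguous block of (parts of) classes, so the block-property holds by construction, and the only slack above $\MSOPT$ comes from having to finish the at most one job that straddles a cut on its starting machine, which contributes at most $p_{max}-1$. The final conversion back to $I$ is automatic: since $I\to I'$ only \emph{refines} classes (huge and selected large jobs are moved into singleton classes), every setup that is legal in $I'$ is a fortiori legal in $I$, and any setup that has become redundant in $I$ can be kept in place as idle time, so the same schedule works for $I$ with no larger makespan. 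The step I expect to be most delicate is the additive bookkeeping in the $\MSOPT+p_{max}-1$ branch: a naive reading of the greedy analysis pays $\MSOPT+s+p_{max}-1$, and the stray $s$ has to be absorbed either into the slack of the lower bound $T\ge s+p_{max}$ or into a setup that already sits at the cut point of the sequence in order to recover the stated $-1$ constant.
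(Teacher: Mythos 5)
Your $\frac{3}{2}\MSOPT$ branch is exactly the paper's argument: chain Lemmas~\ref{lem:hugeLargeJobsIntoNewClasses}, \ref{lem:almostBlockSchedule} and \ref{lem:blockSchedule}, and note that only instance transformations were performed along the way. The gap is in your $\MSOPT+p_{max}-1$ branch, and it is precisely the stray $s$ you flag at the end. Cutting the global sequence $w(C_1),s,w(C_2),\ldots,w(C_k)$ into slabs of length $\MSOPT$ forces you to prepend a setup to every slab, and neither of your proposed rescues works: the slack in the lower bound $T\geq s+p_{max}$ gives no additive room between the achieved makespan and $\MSOPT$, and there is no guarantee that any cut point lands inside a setup segment of the sequence. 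Cutting into slabs of length $\MSOPT-s$ instead would fix the per-machine load, but then you can no longer bound the number of slabs by $m$: the only global inequality available is $ks+\sum_j p_j\leq m\MSOPT$, i.e.\ the sequence has length at most $m\MSOPT-s$, which exceeds $m(\MSOPT-s)$ as soon as $m\geq 2$. (Running the greedy on the transformed instance $I'$ only makes this worse, since the singleton classes add further setups to the sequence.)

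The paper closes exactly this hole by not using the global greedy sequence at all. In the relevant case $p_{max}\leq\frac{1}{2}\MSOPT$ it skips the transformation of Lemma~\ref{lem:hugeLargeJobsIntoNewClasses} entirely (so the set $H$ in Lemma~\ref{lem:almostBlockSchedule} is empty) and reruns the construction of that lemma on \emph{all} machines, component by component in the machine--class incidence graph of an optimal schedule. A connected component with $m'$ machines and $l$ classes pays for at least $l+m'-1$ setups, so the averaging argument gives $\sum_i w(C'_i)+(l-1)s\leq m'(\MSOPT-s)$; this extra $(m'-1)s$ of slack is what lets one cut that component's sequence into at most $m'$ blocks of length $\MSOPT-s$, prepend one setup to each, and arrive at load at most $\MSOPT$ per machine before un-splitting, hence $\MSOPT+p_{max}-1$ after completing each split job on the machine where it started. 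In short, you need structural information about an optimal schedule (the per-component setup count), not just the trivial lower bound $T$ from Section~\ref{sec:pre}, to absorb the $s$. Your observation that the conversion back to $I$ is free because the transformation only refines classes is correct.
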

\begin{proof}
The bound $\MSBLOCKOPT \leq \frac{3}{2}\MSOPT$ directly follows from Lemma~\ref{lem:blockSchedule} and the fact that there are only transformations performed on instance $I$ by Lemma~\ref{lem:hugeLargeJobsIntoNewClasses}. 
The second bound (which gives a better result if $p_{max} \leq \frac{1}{2}\MSOPT$) follows by arguments quite similar to those used before: 
If $p_{max} \leq \frac{1}{2}\MSOPT$ holds, we skip the transformation of Lemma~\ref{lem:hugeLargeJobsIntoNewClasses}. Additionally, in the proof of Lemma~\ref{lem:almostBlockSchedule} we do not remove exclusive machines (thus, considering all machines). 
Note that, since we skipped the transformation of Lemma~\ref{lem:hugeLargeJobsIntoNewClasses}, the set $H$ is empty. 
Then, it is straightforward to calculate the second bound of $\MSBLOCKOPT \leq \MSOPT+p_{max}-1$.
\end{proof}
\subsection{Grouping \& Rounding}\label{subsec:groupingAndRounding}
In this section, we show how we can reduce the search space by rounding the involved processing times to integer multiples of some value depending on the desired precision $\varepsilon > 0$ of the approximation. 
We assume that the transformations described in previous sections have already been performed. 
In order to be able to ensure that the rounding of processing times cannot increase the makespan of the resulting schedule too much, we first need to get rid of classes and jobs that have a very small workload in terms of $\MSBLOCKOPT$ and $\varepsilon$.
In the following, we use $\lambda > 0$ to represent the desired precision, i.e.\ $\lambda$ essentially depends on the reciprocal of $\varepsilon$. 
We call every job $j$ with $p_j \leq \MSBLOCKOPT/\lambda$ a \emph{tiny job} and every class $C_i$ with $w(C_i) \leq \MSBLOCKOPT/\lambda$ a \emph{tiny class}. 

\begin{lemma}\label{lem:movingTinyJobsToTinyClasses}
Given a block-schedule for an instance $I$, with an additive loss of at most $4\MSBLOCKOPT/\lambda$ in the makespan we may assume that tiny jobs only occur in tiny classes. 
\end{lemma}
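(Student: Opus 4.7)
The plan is to transform the instance together with the given block-schedule so that tiny jobs occur only inside tiny classes, while increasing the makespan by at most $4\MSBLOCKOPT/\lambda$. Broadly, I would strip the tiny jobs out of every non-tiny class, reassemble them into new (tiny) classes, and then splice these new classes back into the existing schedule in a way that preserves the block property.

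First, for each non-tiny class $C_i$ I would set aside the subset $T_i \subseteq C_i$ of tiny jobs, leaving $C_i$ to consist solely of its non-tiny jobs (which may make $C_i$ empty if all of its jobs were tiny). Then I would bundle the union $\bigcup_i T_i$ into consecutive groups, each of total workload at most $\MSBLOCKOPT/\lambda$, so that every new group forms a tiny class. At this point, every tiny job belongs to a tiny class, and the instance transformation is complete; what remains is to produce a compatible schedule.

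To rebuild a block-schedule for the new instance, I would exploit the fact that removing $T_i$ from $C_i$ leaves slack of exactly $w(T_i)$ on the machines of $C_i$'s original block. I plan to keep the non-tiny jobs of each $C_i$ on their existing machines and insert each new tiny class at a suitable location---for example, at the end of a block, using the freed slack---so that only one fresh setup per new tiny class is required on any machine that hosts it. Since each new class has workload at most $\MSBLOCKOPT/\lambda$ and can be placed entirely on a single machine, the per-machine load grows by at most $\MSBLOCKOPT/\lambda + s$ for every new tiny class assigned there, and the block-schedule property is easy to re-establish by reordering the machines so that the new tiny classes sit consecutively.

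The hard part will be the accounting that caps the per-machine additive loss at $4\MSBLOCKOPT/\lambda$. Naively relabeling tiny jobs in place would demand one extra setup per non-tiny class incident to a machine, which is too costly when $s$ is large or when many non-tiny classes share a machine; on the other hand, freely shuttling tiny jobs across machines would redistribute workload in an uncontrolled way. I would therefore bound the number of new tiny classes charged to any single machine by a small constant via a careful argument that combines (i) the block-schedule structure, which restricts how many non-tiny classes can simultaneously interact on one machine, and (ii) the freedom to merge tiny jobs from different non-tiny classes into a shared bundle of size close to $\MSBLOCKOPT/\lambda$. A brief case split depending on whether $s$ is smaller or larger than $\MSBLOCKOPT/\lambda$ may be convenient, but in both regimes the overhead per machine is a constant number of tiny-class worths of load plus setup, yielding the claimed $4\MSBLOCKOPT/\lambda$ additive bound.
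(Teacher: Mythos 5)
Your transformation goes in a different---and ultimately unworkable---direction. You extract the tiny jobs from the non-tiny classes and repackage them as \emph{new} tiny classes. Every new class you create costs an extra setup of $s$ on whichever machine hosts it, and $s$ is not bounded by any multiple of $\MSBLOCKOPT/\lambda$ (it can be on the order of $\MSOPT$). So as soon as a machine receives even one of your new tiny classes, its load grows by $s + \MSBLOCKOPT/\lambda$, which already exceeds the $4\MSBLOCKOPT/\lambda$ budget whenever $s > 3\MSBLOCKOPT/\lambda$. The ``careful accounting'' you defer is therefore not a fillable technical gap: no bound on the number of new classes per machine rescues the argument in the regime where $s$ is large. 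A second problem is your suggestion to merge tiny jobs from \emph{different} non-tiny classes into a shared bundle. The class of a job determines which setups are required, so a schedule for such a transformed instance does not translate back into a schedule for $I$ with the same makespan---undoing the bundling reintroduces one setup per original class represented in the bundle, and the lemma explicitly requires that the assumption can be made ``with an additive loss of at most $4\MSBLOCKOPT/\lambda$'' for the original instance.

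The paper avoids both issues by never creating new classes at this stage: within each class $C_i$ it greedily merges the tiny jobs of $C_i$ into new \emph{jobs} of the same class with sizes in $[\MSBLOCKOPT/\lambda, 2\MSBLOCKOPT/\lambda)$, so they are simply no longer tiny, and it glues the one possible leftover group of size below $\MSBLOCKOPT/\lambda$ onto an arbitrary other job of $C_i$. No setups are added, and ungrouping is trivial since a grouped job is just a consecutive run of same-class jobs. The makespan accounting then invokes Lemma~2.3 of Shachnai and Tamir to obtain a block-schedule for the grouped instance with additive loss $2\MSBLOCKOPT/\lambda$, plus at most $2\MSBLOCKOPT/\lambda$ more to reunite paired jobs that land on different machines (this can happen for at most two classes per machine, namely those at its beginning and end). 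If you want to salvage your write-up, the essential change is to group tiny jobs into larger jobs of the \emph{same} class rather than into new classes.
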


\begin{proof}
We prove the lemma by applying the following transformations to each class $C_i$: 
In a first step, we greedily group tiny jobs of class $C_i$ to new jobs with sizes in the interval $[\MSBLOCKOPT/\lambda, 2 \MSBLOCKOPT/\lambda)$.
In a second step, combine the (possibly) remaining tiny grouped job $j \in C_i$ with a size less than $\MSBLOCKOPT/\lambda$, with an arbitrary other job $j' \in C_i$. 
By this transformation we ensure that tiny jobs only occur in tiny classes and it remains to show the claimed bound on the makespan.

First of all, focus on the first step of the transformation and assume that we do not perform the second step.
Let $S$ be the given block-schedule for instance $I$. 
Lemma~2.3 in the work of Shachnai and Tamir~\cite{shachnai01} proves (speaking in our terms) that for the transformed instance there is a schedule $S'$ with makespan of at most $\MSBLOCKOPT+2\MSBLOCKOPT/\lambda$.
The proof also implies that $S'$ is still a block-schedule: 
For each machine $M_j$ it holds that if $M_j$ is configured for class $C_i$ in the new schedule $S'$, it has also been configured for $C_i$ in the original block-schedule $S$.
Thus, if $S$ is a block schedule, so is $S'$ since we do not have any additional setups in $S'$.

Now assume that also the second step of the transformation is carried out 
and consider the block-schedule $S'$ we just proved to exist.
Distinguish two cases, depending on where the tiny grouped job $j \in C_i$, which was paired in the second step, is processed in schedule $S'$.
If $j$ was paired with a job $j'$ and both $j$ and $j'$ are assigned to the same machine in $S'$, the schedule $S'$ already is feasible for the transformed instance (possibly after shifting $j$ and $j'$ such that they are processed consecutively).  
If the paired jobs $j$ and $j'$ are processed on different machines in schedule $S'$, there is a schedule whose makespan is by an additive of at most $2\frac{\MSBLOCKOPT}{\lambda}$ larger than that of $S'$.
To see this, note that in $S'$ this case can happen at most twice per machine (for the classes processed at the beginning and end of the machine).
Hence, we can place any paired jobs $j$ and $j'$ on the same machine yielding a schedule for the transformed instance with the claimed bound on the makespan.
Finally, note that we can easily turn a schedule fulfilling the claimed bound on the makespan into a schedule for the original instance $I$ satisfying the same bound on the makespan. 
\end{proof}

Next, we take care of tiny classes that still might occur in a given instance. 
Again, without losing too much with respect to the optimal makespan we may assume a simplifying property as shown in the next lemma.

\begin{lemma}\label{lem:removingTinyClasses}
With an additive loss of at most $4\MSBLOCKOPT/\lambda$ in the makespan we may assume the following properties:
\begin{enumerate}
  \item Each tiny class consists of a single job. 
  \item In case that $\MSBLOCKOPT/\lambda > s$, it has size $\MSBLOCKOPT/\lambda-s$.
\end{enumerate}
\end{lemma}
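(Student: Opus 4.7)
The plan is to establish the two stated properties by two sequential transformations of the given block-schedule, each incurring an additive makespan loss of at most $2\MSBLOCKOPT/\lambda$.

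\emph{Transformation~1 (consolidation into a single job).} For each tiny class $C_i$, I would replace its jobs by a single merged job of size $w(C_i)$ and reassign this merged job to one of the machines $C_i$ originally occupied. Classes entirely contained on a single machine do not require any move; only classes whose jobs span multiple consecutive machines need to be consolidated. The key structural observation is that, by the block-schedule property, at most one class crosses each inter-machine boundary, so each machine can serve as a consolidation target for at most two tiny classes --- one whose range enters from its left boundary and one that leaves through its right boundary. Since each such consolidation moves at most $w(C_i)\le\MSBLOCKOPT/\lambda$ of additional workload to its target machine, every machine's load grows by at most $2\MSBLOCKOPT/\lambda$, which establishes the first property at a makespan cost of $2\MSBLOCKOPT/\lambda$.

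\emph{Transformation~2 (padding to size $\MSBLOCKOPT/\lambda - s$).} In case $\MSBLOCKOPT/\lambda > s$, I would replace the merged single job of each tiny class by one of size $\MSBLOCKOPT/\lambda - s$, so that together with its setup every padded tiny class occupies exactly $\MSBLOCKOPT/\lambda$ time on a machine. After Transformation~1 each tiny class is an individual single-job class, so the corresponding jobs can be freely reassigned to any machine without affecting the block-schedule structure, giving full flexibility for rebalancing. The plan is to redistribute the padded tiny-class jobs across the machines so that no machine receives an additional load of more than $2\MSBLOCKOPT/\lambda$ compared to its load after Transformation~1, using the fact that the padded jobs all have the same footprint $\MSBLOCKOPT/\lambda$ and that in the original block-schedule each tiny class already contributed at least $s$ to some machine, which bounds how many tiny classes the schedule can contain and hence the total padding cost that has to be absorbed.

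Combining the two transformations yields a (block-)schedule for the modified instance with makespan at most $\MSBLOCKOPT + 4\MSBLOCKOPT/\lambda$, as required. The main obstacle will be the rebalancing step in Transformation~2: a naive in-place padding could overload a machine hosting many tiny classes, so the argument must carefully exploit both the freedom to freely reassign individual single-job tiny classes across machines and an accounting of how much aggregate padding is forced by the counting bound on the number of tiny classes.
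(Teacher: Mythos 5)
Your Transformation~1 matches the paper's first step: by the block-schedule property at most two tiny classes per machine are only partially processed there, so consolidating each tiny class onto a single machine costs at most $2\MSBLOCKOPT/\lambda$, and the case $\MSBLOCKOPT/\lambda \leq s$ is then finished exactly as you describe, by merging each tiny class into one job of size $w(C_i)$.

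Transformation~2, however, has a genuine gap that cannot be repaired along the lines you sketch. Padding \emph{each} tiny class individually to footprint $\MSBLOCKOPT/\lambda$ increases the total volume of the instance by $\sum_i\left(\MSBLOCKOPT/\lambda - s - w(C_i)\right)$, and this quantity is not controlled by your counting bound. Concretely, if $s=1$ and every tiny class is a single unit-size job, a machine can host on the order of $\MSBLOCKOPT/2$ tiny classes; after padding, these alone would require roughly $(\MSBLOCKOPT/2)\cdot(\MSBLOCKOPT/\lambda)$ time. Your bound of at most $m\cdot\MSBLOCKOPT/s$ tiny classes overall gives an aggregate padded footprint of about $m\cdot\MSBLOCKOPT^2/(s\lambda)$, i.e.\ $\MSBLOCKOPT^2/(s\lambda)$ per machine even after perfect rebalancing, which exceeds $2\MSBLOCKOPT/\lambda$ as soon as $\MSBLOCKOPT > 2s$. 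No redistribution can help, because the total volume itself blows up. The paper avoids this by \emph{not} preserving the set of tiny classes: it concatenates all tiny classes (with their setups) into one sequence of total length $L$, rounds $L$ up to a multiple of $\MSBLOCKOPT/\lambda$, and replaces them by $\lambda L/\MSBLOCKOPT$ \emph{new} classes, each consisting of a single job of size $\MSBLOCKOPT/\lambda - s$. Since this regrouping conserves the total length occupied by tiny classes, the new uniform classes can be slotted into the space the old ones occupied on each machine with an overflow of at most one unit $\MSBLOCKOPT/\lambda$ per machine, and a schedule for the new instance converts back to one for the original instance at a further cost of $\MSBLOCKOPT/\lambda$. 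Some such length-conserving regrouping is essential; per-class padding is a dead end.
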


\begin{proof}
At first note that with an additive loss of at most $2\MSBLOCKOPT/\lambda$ in the makespan, we may assume that a tiny class is completely scheduled on one machine in a block-schedule. 
This is true because of reasons similarly used in the proof of the previous lemma: 
For each machine it holds that there are at most two different tiny classes of which some but not all jobs are processed on this machine.
Hence, we may shift all jobs of such classes to one machine and thereby increase the makespan by at most $2\MSBLOCKOPT/\lambda$.

Now distinguish two cases depending on whether $\MSBLOCKOPT/\lambda > s$ or not.
If this is the case, determine the length $L$ of the sequence of all tiny classes (including setup times), round up $L$ to an integer multiple of $\MSBLOCKOPT/\lambda$, remove all tiny classes from the instance and instead, introduce $\lambda L/\MSBLOCKOPT$ new classes each comprised of a single job with workload $\MSBLOCKOPT/\lambda-s$.
Observe that, given a block-schedule in which each tiny class is completely scheduled on one machine, we can simply replace tiny classes by these new classes, increasing the makespan by an additive of at most $\MSBLOCKOPT/\lambda$.
Also, this schedule implies a schedule for the instance in which tiny classes have not been grouped and its makespan is by an additive of at most $\MSBLOCKOPT/\lambda$ larger. 
This schedule is simply obtained by again replacing grouped tiny classes by its respective original classes.

In case that $\MSBLOCKOPT/\lambda \leq s$, we simply group all jobs of a tiny class $C_i$ to a new job $j$ of the same size $p_j = w(C_i)$. Due to the fact that we might assume that a tiny class is completely scheduled on one machine, this proves the lemma.
\end{proof}

From now on, we assume that we have already conducted the grouping from the two previous lemmas and we describe how to round job sizes in order to reduce the search space for later optimization. 
The rounding approach is quite common for makespan scheduling.

Given an instance $I$, we compute its rounded version $I'$ by rounding up the size of each job to the next integer multiple of $\nicefrac{\MSBLOCKOPT}{\lambda^2}$.
We know that there is a block-schedule with makespan at most $\MSBLOCKOPT + 8\frac{\MSBLOCKOPT}{\lambda}$ and we also assume that the properties from Lemma \ref{lem:removingTinyClasses} hold. 

In case that $\MSBLOCKOPT/\lambda > s$ each job has either a processing time of at least $\nicefrac{\MSBLOCKOPT}{\lambda}$ or forms a tiny class with workload at least $\nicefrac{\MSBLOCKOPT}{\lambda}-s$.
On the other hand, in case that $\MSBLOCKOPT/\lambda \leq s$ and there are tiny classes consisting of a single job, to execute such a job, we need perform a setup first which yields a processing time of at least $\nicefrac{\MSBLOCKOPT}{\lambda}$ as well.
Hence, we can have at most $\lambda + 8$ jobs on one machine in the considered block-schedule, leading to an additive rounding error of at most $(\lambda+8)\cdot\nicefrac{\MSBLOCKOPT}{\lambda^2}$ in the makespan.
Therefore, by choosing $\lambda$ appropriately, there is a solution to the rounded instance that approximates $\MSBLOCKOPT$ up to any desired precision $\varepsilon >0$.

\subsection{Optimization over Block-Schedules}

We are ready to show how to compute a block-schedule for the rounded instance $I'$ with makespan at most $(1+\varepsilon)\MSBLOCKOPT$ for any $\varepsilon >0$. The obtained schedule directly implies a schedule for the original instance $I$ with the same bound on the makespan. 

\begin{lemma}\label{lem:numberOfClassConfigurations}
 If all job sizes are a multiple of $\nicefrac{\MSBLOCKOPT}{\lambda^2}$ and $\lambda > 0$ is a constant, there is only a constant number $c_{cl}$ of different class-types.
 \end{lemma}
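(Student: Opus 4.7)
The plan is to interpret a \emph{class-type} as an isomorphism class consisting of classes whose multisets of (rounded) job sizes are identical, and then to upper-bound the number of such multisets by a constant depending only on $\lambda$ (and on the constant $\gamma$ from the assumption $w(C_i) \leq \gamma \MSOPT$ of Section 2).

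First I would argue an upper bound on the number of jobs per class. After the grouping steps of Lemma~\ref{lem:movingTinyJobsToTinyClasses} and Lemma~\ref{lem:removingTinyClasses}, every job either has processing time at least $\MSBLOCKOPT/\lambda$, or it is the sole job of a tiny class of workload essentially $\MSBLOCKOPT/\lambda$ (including the setup charge); in either situation, after the rounding performed in Section~\ref{subsec:groupingAndRounding} to multiples of $\MSBLOCKOPT/\lambda^2$, every job occurring in a multi-job class has rounded size $\geq \MSBLOCKOPT/\lambda$. Combined with the workload bound $w(C_i) \leq \gamma \MSOPT \leq \gamma \MSBLOCKOPT$, every class contains at most $\gamma\lambda$ jobs.

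Next I would bound the number of distinct rounded job sizes that can occur in any class. Since a single job cannot exceed the makespan of any feasible block-schedule (so in particular is at most $\mathcal{O}(\MSBLOCKOPT)$), and every rounded size is a positive integer multiple of $\MSBLOCKOPT/\lambda^2$, there are at most $\mathcal{O}(\lambda^2)$ distinct values a rounded job size can take.

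Finally, I would conclude by a counting argument: the number of class-types is bounded by the number of multisets of cardinality at most $\gamma\lambda$ drawn from an alphabet of size $\mathcal{O}(\lambda^2)$, which is at most
\[
\binom{\mathcal{O}(\lambda^2) + \gamma\lambda}{\gamma\lambda},
\]
a quantity that is a constant whenever $\lambda$ is a constant. Setting $c_{cl}$ to this bound proves the lemma. I do not expect a real obstacle here; the only care needed is to justify that the two lower bounds (jobs of size $\geq \MSBLOCKOPT/\lambda$ and rounding granularity $\MSBLOCKOPT/\lambda^2$) really are in force after all the preceding transformations, which is exactly what Section~\ref{subsec:groupingAndRounding} arranged.
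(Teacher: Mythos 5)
Your proposal is correct and takes essentially the same approach as the paper: represent each class by the counts of jobs of each of the at most $\lambda^2$ rounded sizes and bound the number of such representations by a constant via the workload bound $w(C_i)\leq\gamma\MSOPT$ (the paper simply bounds each count by $\gamma\lambda^2$ and takes $c_{cl}=(\gamma\lambda^2)^{\lambda^2}$). Your additional step invoking the minimum job size $\MSBLOCKOPT/\lambda$ to sharpen the per-class job count to $\gamma\lambda$ is valid but not needed; the rounding granularity alone already yields a constant bound.
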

\begin{proof}
 We can represent any class $C_i$ by a tuple of length $\lambda^2$ describing how many jobs of each size $l \cdot \nicefrac{\MSBLOCKOPT}{\lambda^2}$, $1 \leq l \leq \lambda^2$, occur in class $C_i$.
 As each class has a size of at most $\gamma\cdot\MSOPT$, each entry of the tuple is limited by $\gamma\lambda^2$ and there is at most a constant number $c_{cl}\coloneqq(\gamma\lambda^2)^{\lambda^2}$ of different tuples describing the classes of $I'$.
 In the following we say that all classes represented by the same such tuple are of the same \emph{class-type}, proving the lemma. 
\end{proof}

We can represent the classes that have to be scheduled as a tuple of size $c_{cl}$ where each entry contains the number of times classes of the respective class-type occur.
Given a block-schedule $S$, we consider machine configurations that describe which classes are finished on the first $i$ machines.
We denote the sub-schedule induced by these first $i$ machines by $S_i$.

\begin{lemma}
 If all job sizes are a multiple of $\nicefrac{\MSBLOCKOPT}{\lambda^2}$ and $\lambda > 0$ is a constant, the number of machine configurations representing $S_i$ for some block-schedule $S$ and some $i > 0$ is bounded by a value $c_{conf}$ that is polynomial in $m$. 
\end{lemma}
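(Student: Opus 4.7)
The plan is to enumerate a machine configuration for $S_i$ by two pieces of information: (a) for each of the $c_{cl}$ class-types from Lemma~\ref{lem:numberOfClassConfigurations}, the number of classes of that type completely finished within $S_i$; and (b) the state of the unique class that, by the block-property, may have been started but not finished on $M_1,\ldots,M_i$. Once these numbers are bounded, the product bounds $c_{conf}$.

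For part (a), I would first argue that every class occupies at least $\MSBLOCKOPT/\lambda$ time on any machine it appears on. This uses the preprocessing from Section~\ref{subsec:groupingAndRounding}: every non-tiny class contains a non-tiny job of size at least $\MSBLOCKOPT/\lambda$, and every tiny class either has workload at least $\MSBLOCKOPT/\lambda-s$ (so together with its setup it occupies at least $\MSBLOCKOPT/\lambda$ time) or has the form guaranteed by Lemma~\ref{lem:removingTinyClasses} for the case $\MSBLOCKOPT/\lambda \leq s$, where processing any such class requires a setup and a job and hence takes at least $\MSBLOCKOPT/\lambda$ time. Since each machine in a block-schedule under consideration has makespan at most a constant multiple of $\MSBLOCKOPT$, every machine carries $O(\lambda)$ classes, and therefore $S_i$ contains $O(m\lambda)=O(m)$ completed classes in total. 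Consequently, each of the $c_{cl}$ entries of the ``finished classes'' vector is at most $O(m)$, giving at most $O(m)^{c_{cl}}=\mathrm{poly}(m)$ choices for part (a).

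For part (b), the block-property says that at most one class straddles the boundary between machines $M_i$ and $M_{i+1}$. Its class-type is one of the $c_{cl}$ types (plus the option ``none''), and its partial state is described by how many of its constituent jobs of each of the $\lambda^2$ possible sizes have already been executed. As in the proof of Lemma~\ref{lem:numberOfClassConfigurations}, each coordinate of this sub-tuple is bounded by $\gamma\lambda^2$, so the number of partial states is at most $(\gamma\lambda^2+1)^{\lambda^2}$, again a constant.

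Multiplying these two bounds yields $c_{conf}\leq (\gamma\lambda^2+1)^{\lambda^2}\cdot(c_{cl}+1)\cdot O(m)^{c_{cl}}$, which is polynomial in $m$ whenever $\lambda$ is a constant. The main obstacle I expect is arguing the $\MSBLOCKOPT/\lambda$ lower bound on the time any class occupies a machine, since it requires carefully combining the two sub-cases ($\MSBLOCKOPT/\lambda>s$ versus $\MSBLOCKOPT/\lambda\leq s$) from Lemma~\ref{lem:removingTinyClasses} with the presence of the mandatory setup; everything else is straightforward counting.
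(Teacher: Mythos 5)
Your proposal is correct and follows essentially the same route as the paper: encode a configuration as a vector counting finished classes per class-type, plus the type and partial job-count state of the single split class permitted by the block-property, and bound each coordinate. The only cosmetic difference is that you bound the per-machine class count via the $\MSBLOCKOPT/\lambda$ occupancy of each class, whereas the paper bounds each entry by $m(\lambda+8)$ via the job count per machine; both yield a polynomial-in-$m$ bound of the same form.
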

\begin{proof}
 First, note that in a block-schedule $S$, for every $S_i$, there is at most one class that is split due to the block-schedule property.
 Now, to uniquely define a candidate configuration, we need to store information about the classes that are finished, and in case a class has been split, the type of this class and which jobs of this class are finished.
 We reserve $c_{cl}$ entries for the finished classes, where each entry corresponds to the number of classes of the certain type that has been fully finished.
 Each entry is at most $m\cdot(\lambda+8)$ with similar arguments as in the proof of Lemma~\ref{lem:numberOfClassConfigurations} and the reasoning concerning the maximum rounding error.
 For the class that has been split, we store the type of that class in an extra entry, which gives $c_{cl}$ possible values.
 If there is no class that has been split, we leave this entry empty adding another possible value to the entry.
 Finally, we store the number of jobs from the split class that have been finished for each job size as $\lambda^2$ additional entries, where each entry does not exceed $c_{cl} \cdot\lambda$ similar to the structure in Lemma~\ref{lem:numberOfClassConfigurations}.
 Overall, we write a configuation as a tuple $\left(n_1,\ldots,n_{c_{cl}},j,u_{1},\ldots,u_{\lambda^2}\right)$ and thus there are at most $c_{conf}\coloneqq(m(\lambda+8))^{c_{cl}}\cdot (c_{cl} + 1) \cdot (c\lambda)^{\lambda^2}$ possible configurations, which proves the lemma. 
\end{proof}

We now build a graph where we add a node for each machine configuration.
We draw a directed edge from node $u$ to $v$ if and only if the machine configuration corresponding to $v$ can be reached from the configuration $u$ by using at most one additional machine with makespan not larger than  $(1+\varepsilon)\MSBLOCKOPT$.
That is, assuming $u$ is a possible sub-schedule induced by the first $i$ machines, we verify whether $v$ is a possible sub-schedule induced by the first $i+1$ machines.
We can do so as we assume that we have guessed $\MSOPT$ correctly and we can hence determine $\left(1 + \varepsilon\right)\MSBLOCKOPT$ which is the amount of workload we will fit on one machine.
In order to determine the edges of the graph that describes our search space, we prove the following lemma, where we denote $\mathds{1}_B$ as the indicator variable which is $1$ in case the boolean condition $B$ is satisfied and $0$ elsewise.
Also, we define $m_{pk}$ to be the number of jobs of type $k$ in class-type $p$, where $k\in\{1,\ldots,\lambda^2\}$ and $p\in\{1,\ldots,c_{cl}\}$.

\begin{lemma}
 If each configuration $\left(\vec{n},j,\vec{u}\right) = \left(n_1,\ldots,n_{c_{cl}},j,u_1,\ldots,u_{\lambda^2}\right)$ is represented by a node, there is a directed edge from node $V = \left(\vec{n},j,\vec{u}\right)$ to $\tilde{V} = \left(\vec{\tilde{n}},\tilde{j},\vec{\tilde{u}}\right)$ if and only if
 \begin{align}
  &\mathds{1}_{j\neq \tilde{j} \vee u\neq \tilde{u}} s + \sum_{k=1}^{\lambda^2} \left(\left(\tilde{u}_k - u_k\right)\cdot k \cdot \frac{\MSBLOCKOPT}{\lambda^2}\right) \nonumber\\
  &+ \sum_{p=1}^{c_{cl}} \left(\left(\tilde{n}_p - n_p\right)\left(s +  \sum_{k=1}^{\lambda^2} m_{pk} \cdot k \cdot \frac{\MSBLOCKOPT}{\lambda^2}\right)\right) \nonumber \\
  &\leq \left(1+\varepsilon\right)\cdot \min\left\{\MSOPT+p_{max}-1, \frac{3}{2}\MSOPT\right\}. \label{ineq:edgesInGraph}
 \end{align}
\end{lemma}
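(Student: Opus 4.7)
The plan is to read inequality~\eqref{ineq:edgesInGraph} as asserting that its left-hand side is exactly the workload which machine $M_{i+1}$ must carry in every block-schedule whose restriction to $M_1,\ldots,M_i$ realises the configuration $V$ and whose restriction to $M_1,\ldots,M_{i+1}$ realises $\tilde V$, while its right-hand side $(1+\varepsilon)\MSBLOCKOPT$ is the per-machine time budget permitted by Theorem~\ref{thm:blockSchedule}. Under this reading, the forward direction is a tally of the contents of $M_{i+1}$ and the reverse direction is an explicit construction of a schedule fragment for $M_{i+1}$ that realises the stated load.

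For the forward direction I would fix a block-schedule whose prefix schedules $S_i$ and $S_{i+1}$ are encoded by $V$ and $\tilde V$ and tally the contents of $M_{i+1}$ in three disjoint groups. First, for every class-type $p$ the value $\tilde n_p - n_p$ counts the number of classes of type $p$ that are completed entirely on $M_{i+1}$; each contributes one setup of length $s$ plus the workload $\sum_{k=1}^{\lambda^2} m_{pk}\cdot k\cdot \MSBLOCKOPT/\lambda^2$, producing the second sum of~\eqref{ineq:edgesInGraph}. Second, jobs of the class that was split at the end of $S_i$ and finish on $M_{i+1}$ contribute exactly $\sum_{k=1}^{\lambda^2} (\tilde u_k - u_k)\cdot k\cdot \MSBLOCKOPT/\lambda^2$. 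Third, by the block-schedule property at most one additional setup for a split class is paid on $M_{i+1}$, namely whenever the split class is touched on $M_{i+1}$ or its identity changes between $V$ and $\tilde V$, which is precisely the condition $(j,u)\neq(\tilde j,\tilde u)$ captured by the indicator; a setup for a class that is newly split at the end of $M_{i+1}$ and continues to $M_{i+2}$ is absorbed into this same indicator because then $\tilde j$ equals that new class. Summing the three contributions yields the left-hand side of~\eqref{ineq:edgesInGraph}, and the assumed per-machine budget gives the inequality.

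For the reverse direction, assuming~\eqref{ineq:edgesInGraph} holds I would lay out $M_{i+1}$ greedily: if the indicator is $1$, begin with a setup for the previously split class and process the jobs encoded by the positive differences $\tilde u_k - u_k$; then for each class-type $p$ with $\tilde n_p > n_p$ perform a setup followed by all jobs of a still-unscheduled class of type $p$, repeating $\tilde n_p - n_p$ times; finally, if $\tilde j$ denotes a genuinely new split class, consume the remaining time at the end of $M_{i+1}$ with its setup followed by its first $\tilde u_k$ jobs of each size $k$. The block-schedule property is preserved by construction, and because the total processing time equals the left-hand side of~\eqref{ineq:edgesInGraph}, the makespan of $M_{i+1}$ stays within the stated budget. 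The main obstacle, requiring care rather than cleverness, is the bookkeeping hidden in the single indicator $\mathds{1}_{j\neq \tilde j \vee u\neq \tilde u}$: one must check by a short case analysis on whether $j$ and $\tilde j$ are empty, equal, or distinct that precisely one split-class setup is charged to $M_{i+1}$ exactly when $M_{i+1}$ does any work on a split class, while all other split-class setups are deferred to the machine on which they are actually paid.
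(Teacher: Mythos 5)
Your overall strategy --- read the left-hand side of \eqref{ineq:edgesInGraph} as the load of machine $M_{i+1}$, prove the forward direction by tallying that load and the reverse direction by an explicit greedy construction --- is the same as the paper's, which likewise argues by a case analysis on the relationship between $j,\tilde j, u,\tilde u$. However, your ``three disjoint groups'' tally is not correct in the main case $j\neq\tilde j$ (and in the case $j=\tilde j$ with $u_i>\tilde u_i$ for some $i$), and this is exactly the case where the lemma's accounting is delicate. The class $j$ that was split at $V$ and is \emph{completed} on $M_{i+1}$ is counted in $\tilde n_p-n_p$ for its type $p$, so your first group --- which you restrict to classes ``completed entirely on $M_{i+1}$'' --- does not match the second sum in \eqref{ineq:edgesInGraph}: that sum charges $M_{i+1}$ with a setup plus the \emph{full} workload $\sum_k m_{pk}\cdot k\cdot\MSBLOCKOPT/\lambda^2$ of class $j$, even though part of that workload was done on machines $M_1,\ldots,M_i$. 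Likewise your second group is wrong as stated: the jobs of the old split class $j$ that finish on $M_{i+1}$ contribute $W_j-\sum_k u_k\cdot k\cdot\MSBLOCKOPT/\lambda^2$ (with $W_j$ the full workload of $j$'s type), not $\sum_k(\tilde u_k-u_k)\cdot k\cdot\MSBLOCKOPT/\lambda^2$, because $\tilde u$ describes the progress on the \emph{new} split class $\tilde j$, a different class. The identity behind \eqref{ineq:edgesInGraph} is not a disjoint decomposition but a cancellation: the over-count $+\sum_k u_k\cdot k\cdot\MSBLOCKOPT/\lambda^2$ hidden in the full-workload term for $j$ is cancelled by the $-u_k$ part of the first sum, while the $+\tilde u_k$ part accounts for the partial work on $\tilde j$, whose setup is the indicator's $s$ (the setup for finishing $j$ being already paid inside the $(\tilde n_p-n_p)$ term). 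As written, your three groups sum to a quantity that differs from the left-hand side by $s+W_j$ whenever the split class changes, so the forward direction does not go through without this repair.

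A second, smaller point: you defer ``a short case analysis on whether $j$ and $\tilde j$ are empty, equal, or distinct'' to the indicator alone, but the paper's case $j=\tilde j$, $u\le\tilde u$ componentwise, $u\neq\tilde u$ needs its own argument --- the configuration cannot distinguish between continuing the same physical class and finishing it while splitting a fresh class of the same type, and one must observe that both realizations cost the same ($s$ for the continued/new split class plus $s$ inside the completed-class term), so that the single formula is simultaneously necessary and sufficient. Your reverse-direction construction implicitly picks one realization, which is fine for sufficiency, but the forward direction must verify that \emph{every} admissible pair of consecutive prefix schedules satisfies \eqref{ineq:edgesInGraph}, which requires checking both realizations. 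With the corrected accounting and this extra case spelled out, your argument becomes the paper's proof.
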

\begin{proof}
We prove the statement for the following cases:
 \begin{enumerate}
  \item\label{enum:prf:edges:unequalClasses} $j\neq \tilde{j}$:
  
  First, note that the number of classes of type $p\in\{1,\ldots,c_{cl}\}$ that have been completed between node $V$ and node $\tilde{V}$, i.\,e.\ on the additional machine, is expressed in the value $\left(\tilde{n}_p - n_p\right)$.
  Now, in order to finish all jobs from a class of type $p\in\{1,\ldots,c_{cl}\}$, we need to configure the machine for this class and afterward, the workload of all jobs contained in that class type needs to be finished.
  This leads to an overall processing time of $s + \sum_{k=1}^{\lambda^2} m_{pk} \cdot k \cdot \nicefrac{\MSBLOCKOPT}{\lambda^2}$ for all jobs of the specific class type.
  In case the class being finished is $j$, there is still the same setup time, but there is less workload to be completed.
  This can be described by subtracting the amount of work already finished in node $V$, which is $\sum_{k=1}^{\lambda^2} u_k\cdot k \cdot \nicefrac{\MSBLOCKOPT}{\lambda^2}$.
  Additionally, to reach the state represented by node $\tilde{V}$, class $\tilde{j}$ needs to be set up and the workload depicted by $\tilde{u}$ has to be completed yielding an additional processing time of $s + \sum_{k=1}^{\lambda^2} \tilde{u}_k \cdot k \cdot \nicefrac{\MSBLOCKOPT}{\lambda^2}$.
  Summing all these times up, we get exactly the value on the left-hand side of inequality (\ref{ineq:edgesInGraph}).
  \item $j = \tilde{j} \wedge \exists i, u_i > \tilde{u}_i$:
  
  In this case, we indeed have $j = \tilde{j}$, but as we have $u_i > \tilde{u}_i$ for some $i$, there are more jobs of type $i$ finished in $V$ than in $\tilde{V}$.
  Thus, the class that had been partly executed at the end of $V$ needs to be completed and the proof of case \ref{enum:prf:edges:unequalClasses} similarly applies.
  \item $j = \tilde{j} \wedge \forall i, u_i \leq \tilde{u}_i \wedge u \neq \tilde{u}$:
  
  Here, the scheduler does not necessarily need to finish the class that had been partly executed at the end of $V$.
  However, the overall necessary workload is the same whether the work on the current class is only continued and not finished (cost $s$ for setting up the machine for the class) and a new class is fully executed (cost $s$) or whether it is finished (cost $s$) and a new class of the same type is initialized (cost $s$) and not finished.
  Thus, the proof of case \ref{enum:prf:edges:unequalClasses} still applies.
  Note that this also holds if the number of classes of type $j$ that have been fully finished is the same in $V$ and $\tilde{V}$.
  
  \item $j = \tilde{j} \wedge u = \tilde{u}$:
  
  In this case, we save an overall workload of $s$ in comparison to the other cases.
  This is due to the fact that we do not need to perform a setup for class $j$ as we can restrict ourselves to executing entire classes.
 \end{enumerate}
 Combining these cases completes the proof. 
\end{proof}

It is time to show that a schedule using only $m$ machines and finishing all jobs exists.

\begin{lemma}
We can construct a graph $G$ such that there is a path from the node representing no job at all (source) to the node representing the entire instance $I'$ (target) that has a length of at most $m$.
\end{lemma}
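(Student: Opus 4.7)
The plan is to exhibit the required path by reading it directly off a block-schedule whose existence is already guaranteed. Combining Theorem~\ref{thm:blockSchedule} with Lemmas~\ref{lem:movingTinyJobsToTinyClasses} and~\ref{lem:removingTinyClasses} and the rounding discussion in Section~\ref{subsec:groupingAndRounding}, there is a block-schedule $S$ for the rounded instance $I'$ whose makespan is at most $(1+\varepsilon)\MSBLOCKOPT$ (for a suitable choice of $\lambda$ in terms of $\varepsilon$). The graph $G$ I construct has a node for every machine configuration and the edges defined by the previous lemma, with source $V_0 := (\vec{0},\bot,\vec{0})$ (no class finished, no class started) and target $V^*$ encoding that every class of $I'$ has been fully scheduled.

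First, I would use the block-schedule $S$ to define, for each $i \in \{0,1,\ldots,m\}$, a node $V_i$ in $G$ corresponding to the sub-schedule $S_i$ induced by the first $i$ machines. The crucial point is that the block-property of $S$ guarantees that at any such prefix there is at most one class of which some but not all jobs are processed on $M_1,\ldots,M_i$; hence the entry $j$ together with $\vec{u}$ is well-defined (set to $\bot$/$\vec{0}$ if no class is split at this prefix), the tuple $\vec{n}$ records how many classes of each class-type have been fully completed on $M_1,\ldots,M_i$, and $V_i$ is a valid configuration as defined above. By construction, $V_0$ is the source, and $V_m=V^*$ is the target, since $S$ schedules all of $I'$ on $m$ machines (the final prefix has no split class).

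Next, I would verify that each consecutive pair $(V_{i-1},V_i)$ is connected by an edge of $G$. Going from $V_{i-1}$ to $V_i$ corresponds exactly to adding the single machine $M_i$ of $S$. The load placed on $M_i$ consists of: completing any class partially executed at the end of $S_{i-1}$ (captured by the setup plus $\sum_k \tilde u_k - u_k$ terms on the left-hand side of inequality~(\ref{ineq:edgesInGraph}), with the four cases of the previous lemma covering whether this class is continued, finished, or whether a new class of the same type starts), fully processing all additional classes finished on $M_i$ (the $\vec{\tilde n}-\vec{n}$ contribution), and possibly starting a new partial class at the end (the remaining $\tilde u$ contribution). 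Summing these quantities reproduces exactly the total processing time of $M_i$ in $S$, which is at most $(1+\varepsilon)\MSBLOCKOPT$ since $S$ has makespan at most $(1+\varepsilon)\MSBLOCKOPT$. Thus the edge $(V_{i-1},V_i)$ is present in $G$.

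Concatenating these $m$ edges yields a path $V_0 \to V_1 \to \cdots \to V_m = V^*$ of length exactly $m$, which establishes the lemma. The only subtle point, and what I would double-check, is that the encoding of the $V_i$ really uses the block-schedule property to ensure uniqueness of the split class at every prefix; everything else is a bookkeeping exercise relating the physical load on the $i$-th machine to the expression on the left-hand side of~(\ref{ineq:edgesInGraph}), which the previous lemma already carries out.
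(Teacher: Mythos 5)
Your proposal is correct and takes essentially the same route as the paper: invoke the existence of a block-schedule for $I'$ with makespan at most $(1+\varepsilon)\MSBLOCKOPT$ (via Theorem~\ref{thm:blockSchedule}, the grouping lemmas, and the rounding error bound), and observe that this schedule induces a source-to-target path of length at most $m$ in $G$. You simply spell out the prefix-configuration bookkeeping that the paper compresses into ``by construction, the considered graph must contain a path describing this schedule.''
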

\begin{proof}
Using Theorem \ref{thm:blockSchedule}, there is a block-schedule with makespan at most $\MSBLOCKOPT$.
 Due to Lemma \ref{lem:movingTinyJobsToTinyClasses} and Lemma \ref{lem:removingTinyClasses} together with the additive rounding error and a suitable value for $\lambda$ depending on $\varepsilon$, there exists a solution to the rounded instance $I'$ with makespan at most 
 \begin{align*}
    (1+\varepsilon) \MSBLOCKOPT = (1+ \varepsilon) \min\{\MSOPT + p_{max}-1, \nicefrac{3}{2} \MSOPT\}.
 \end{align*}
 By construction, the considered graph must contain a path describing this schedule, proving the lemma.
 Note that this naturally gives an approximation with factor at most $\left(1+\varepsilon\right)\left(\MSOPT+p_{max}-1\right)$ which is better in the case of $p_{max}\leq\nicefrac{1}{2}\MSOPT$ and which gives a PTAS for unit processing times.
\end{proof}

\begin{theorem}
 By using breadth-first search on $G$, we can determine a schedule for the original instance $I$ with makespan at most 
 \[(1+\varepsilon)\min\left\{\frac{3}{2}\MSOPT,\MSOPT+p_{max}-1\right\}.\]
 It implies an algorithm with exactly this approximation guarantee and runtime polynomial in $n,k$ and $m$. 
\end{theorem}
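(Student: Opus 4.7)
The plan is to combine the graph construction with a standard shortest-path argument. I would first invoke the preceding lemma to guarantee a source-to-target path of length at most $m$ in $G$, where the source represents the empty configuration (all entries zero, no split class) and the target represents the configuration in which every class of the rounded instance $I'$ has been completed. Since $G$ is an unweighted directed graph, breadth-first search from the source finds such a path in time $O(|V|+|E|)$. By construction, each edge on this path corresponds to loading exactly one additional machine with an aggregate workload that, by inequality~(\ref{ineq:edgesInGraph}), does not exceed $(1+\varepsilon)\MSBLOCKOPT$. Concatenating the edges therefore yields a block-schedule for $I'$ on at most $m$ machines with makespan at most $(1+\varepsilon)\MSBLOCKOPT$.

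Next I would unwind the chain of transformations from Section~\ref{subsec:groupingAndRounding}. Rounding only increased job sizes, so the schedule remains feasible after replacing the rounded sizes with the original ones, with no increase in makespan. The groupings performed in Lemmas~\ref{lem:movingTinyJobsToTinyClasses} and~\ref{lem:removingTinyClasses} can likewise be inverted by replacing each grouped job by its original constituents placed back-to-back on the same machine. Combined with Theorem~\ref{thm:blockSchedule}, which already converted $I$ into the block-schedule model, this produces a feasible schedule for the original instance $I$ with makespan at most $(1+\varepsilon)\MSBLOCKOPT = (1+\varepsilon)\min\{\MSOPT+p_{max}-1,\tfrac{3}{2}\MSOPT\}$. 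Wrapping the procedure inside the binary search on $\MSOPT$ discussed in Section~\ref{sec:pre} removes the assumption that $\MSOPT$ is known exactly, losing only a factor that can be absorbed into~$\varepsilon$.

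For the runtime analysis, I would note that for a fixed $\varepsilon>0$ the precision parameter $\lambda$ is constant, so $c_{cl}$ is constant and $c_{conf}$ is polynomial in~$m$. The number of edges of $G$ is thus at most $c_{conf}^2$, still polynomial in~$m$, and each edge check reduces to evaluating a fixed linear expression in polynomially bounded integers. BFS therefore runs in time polynomial in~$m$, while the enumerations required to build the rounded instance $I'$ and the graph $G$ are polynomial in $n$, $k$ and~$m$.

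The main obstacle, as I see it, is bookkeeping rather than any deep step: one has to verify that unwinding each transformation from Section~\ref{subsec:groupingAndRounding} and Lemma~\ref{lem:hugeLargeJobsIntoNewClasses} incurs no further additive penalty beyond what was already budgeted into the choice of $\lambda$ as a function of $\varepsilon$. Once that accounting is in place, the makespan of the schedule produced by BFS on $G$ may be charged directly against $(1+\varepsilon)\MSBLOCKOPT$, and the theorem follows.
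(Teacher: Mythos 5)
Your proposal is correct and follows essentially the same route as the paper's own proof: breadth-first search on the configuration graph yields a source-to-target path of length at most $m$ whose edges determine the per-machine loads bounded by $(1+\varepsilon)\MSBLOCKOPT$, the transformations are then inverted to recover a schedule for $I$, and the runtime follows from $c_{conf}$ being polynomial in $m$. Your additional remarks on the binary-search wrapper and the explicit edge count are consistent with what the paper establishes elsewhere and introduce no gap.
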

\begin{proof}
 Obviously, if we use breadth-first search on the graph, where the source vertex corresponds to the state where no job has been finished and the target vertex corresponds to the state where all jobs have been finished, this gives a path $p = (v_0, v_1, \ldots, v_l)$ of length at most $m$.
 By following this path and considering the difference between two consecutive nodes $p_{i-1}$ and $p_{i}$, we can efficiently determine the jobs from instance $I'$ to be scheduled on machine $M_i$.
 The resulting schedule can be efficiently transformed back into the final schedule for instance $I$ as already discussed during the description of the transformation we apply to $I$. Also, since the number of nodes is essentially the number of configurations, which in turn is polynomial in $m$, the search can be carried out efficiently. 
\end{proof}


\section{An Online Variant}
\label{sec:online}
While in our original model discussed before we have assumed that all jobs are available at time $0$, also online variants are of fundamental interest. 
Consider a model in which a release time $r_j$ is associated with each job $j$ and a job is not known to the scheduler before $r_j$, i.e.\ jobs arrive in an online fashion.
The objective remains the minimization of the makespan and we assess the quality of an online algorithm using standard competitive analyses:
An online algorithm is $c$-competitive if, for any instance, the makespan of the schedule computed by the online algorithm is by a factor of at most $c$ larger than that of an optimal (offline) solution.

A very simple lower bound on the competitiveness of any online algorithm can be obtained by exploiting the fact that any online algorithm cannot know the class of a job arriving later on in advance and hence, cannot perform a suitable setup operation beforehand. The following lemma shows that this fact results in a lower bound that can be arbitrary close to $2$.

\begin{lemma}
No online algorithm can be $c$-competitive for $c \leq 2 - \varepsilon$ and any $\varepsilon > 0$.
\end{lemma}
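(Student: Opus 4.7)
My plan is an adversarial argument against an arbitrary deterministic online algorithm $A$: for each $\varepsilon>0$, I will exhibit an instance on $m=2$ identical machines such that the makespan produced by $A$ exceeds $(2-\varepsilon)\MSOPT$. The underlying intuition is that at any fixed time $A$ has configured only a bounded number of classes on its two machines, whereas the number $k$ of classes in the input is not a priori bounded, so the adversary can always release a job of a class for which no machine is ready; the resulting fresh setup is something the offline optimum can hide in an otherwise idle window.

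Concretely, fix $\delta=1$ and let $s$ be any sufficiently large integer (to be determined from $\varepsilon$). The plan is to release, at time $0$, a single job $j_1\in C_1$ with $p_{j_1}=\delta$. After simulating $A$ up to time $s+\delta$, I read off the set $\mathcal{C}'$ of classes for which $A$ has initiated a setup on either of the two machines during $[0,s+\delta]$; only constantly many classes can lie in $\mathcal{C}'$. Since $k$ is unrestricted, I may then pick a class $C_2\notin\mathcal{C}'$ and, at time $s+\delta$, release a second job $j_2\in C_2$ with $p_{j_2}=\delta$.

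Now I would bound the two makespans. By construction, $A$ first learns of $C_2$ only at time $s+\delta$ and no machine has $C_2$ set up by then, so $A$ must begin a fresh $C_2$-setup at some time $\ge s+\delta$, pushing the completion of $j_2$ to at least $2s+2\delta$. The offline optimum, in contrast, places $j_1$ on $M_1$ in the obvious way (setup $[0,s]$, process $[s,s+\delta]$) and uses the otherwise idle $M_2$ to perform the $C_2$-setup during $[0,s]$, so that $j_2$ can be processed on $[s+\delta,s+2\delta]$ immediately upon its release. Hence $\MSOPT\le s+2\delta$, and the competitive ratio is at least $(2s+2\delta)/(s+2\delta)=2-2/(s+2)$, which exceeds $2-\varepsilon$ as soon as $s>2/\varepsilon-2$.

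The main obstacle is ruling out that $A$ evades the second setup through speculative preparation of $M_2$; this is precisely why $C_2$ is chosen adaptively after inspecting $A$'s behavior, which is a standard and valid adversary strategy against any deterministic $A$. A minor modelling point to verify is that the offline optimum may perform a setup for a class before any job of that class is released, which is consistent with the formulation of Section~2, where a setup is a machine-side preparation rather than a job-level operation.
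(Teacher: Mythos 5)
Your proposal is correct and follows essentially the same adversarial construction as the paper: release one unit job at time $0$, then adaptively release a second unit job of a class the algorithm has not yet set up, so that the online algorithm pays a second setup serially while the optimum hides the second setup on the otherwise idle machine during $[0,s]$. The only cosmetic difference is that you release the second job at time $s+1$ rather than $s$, yielding the ratio $\frac{2s+2}{s+2}$ instead of the paper's $\frac{2s+1}{s+1}$; both tend to $2$ as $s$ grows, and your explicit adaptive choice of the second class is exactly what the paper's phrase ``a class for which the online algorithm has not performed a setup yet'' implicitly requires.
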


\begin{proof}
Consider an instance with (without loss of generality) $m=2$ machines and the following adversary: At time $0$ the adversary releases the first job of some class $C_1$ with processing time $p_1 = 1$. 
Then, at time $s$ a second job with processing time $p_2 =1$ is released belonging to a class for which the online algorithm has not performed a setup yet.
Trivially, the optimal algorithm obtains a schedule with makespan $s+1$ by performing at time $0$ a setup for the first job on one machine and one for the second job on the second machine, and then processes the two jobs until time $s+1$.
Any online algorithm cannot do better than performing a setup for the second job at time $s$ and then processing this job. This directly implies a makespan of at least $2s+1$. Hence, the competitiveness is at least $\frac{2s+1}{s+1}$, which can be arbitrary close to $2$ for large setup times $s$. 
\end{proof}

In \cite{onoff91}, Shmoys et al.\ present a quite general technique to turn an offline algorithm for a scheduling problem without release dates and an approximation factor of $\alpha$ into a $2\alpha$-competitive online algorithm for the respective problems with release dates. 
Although this factor of $2$ does not directly carry over to our scheduling problem since we also have to take into account setup processes, a slight modification yields the following result.

\begin{theorem}
If each job is associated with a release time and jobs are revealed to the scheduler over time at these release times, our algorithm implies a polynomial time $c$-competitive online algorithm and $c$ can be made arbitrarily close to $4$.
\end{theorem}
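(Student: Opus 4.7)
The plan is to adapt the doubling-phase framework of Shmoys, Wein, and Williamson~\cite{onoff91}. Set $\alpha := \nicefrac{3}{2}+\varepsilon'$, where $\varepsilon'$ is the approximation slack of the offline algorithm from Theorem~4.2 and will be chosen later in terms of how close we want to be to $4$. The online algorithm proceeds in phases with geometrically growing start times $\tau_1<\tau_2<\cdots$ satisfying $\tau_{i+1}=2\tau_i$. At the beginning of phase $i$, it collects the batch $B_i$ of all jobs that have been released by time $\tau_i$ and not yet scheduled, invokes the offline $\alpha$-approximation on $B_i$, and executes the resulting schedule during the phase. A standard guess-and-double procedure is used to choose $\tau_1$ as well as to supply the offline algorithm with a suitable estimate of the optimum; since each failed guess costs at most a constant fraction of the eventual workload, its overhead is absorbed in the final $(1+O(\varepsilon'))$ factor.

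For the correctness and the makespan bound, I would first observe that $\MSOPT(B_i)\le \MSOPT$: restricting any optimal offline schedule of the full instance to the jobs of $B_i$ (keeping setups intact) yields a feasible schedule for $B_i$ with the same makespan. Hence the offline algorithm produces for $B_i$ a schedule of length at most $\alpha \MSOPT(B_i)\le \alpha \MSOPT$, which fits within phase $i$ as soon as $\tau_i \ge \alpha \MSOPT$. Letting $r$ denote the release time of the last job, $r$ falls into some batch $B_i$ with $\tau_{i-1}<r\le \tau_i$. By the doubling structure $\tau_i=2\tau_{i-1}<2r$, and hence the online algorithm finishes this batch by time $\tau_i+\alpha\MSOPT\le 2\tau_i< 4r$. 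Combining with the trivial lower bound $\MSOPT\ge r$ and letting $\varepsilon'\to 0$ yields the claimed competitive ratio that can be made arbitrarily close to $4$.

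The main obstacle is the bookkeeping around the setup operations, which is precisely what the "slight modification" referenced before the theorem has to handle. In the offline optimum, jobs of the same class released in different phases can share a single setup on a common machine, whereas in the online algorithm such jobs are partitioned across distinct batches $B_i$ and incur independent setups, so one has to verify that $\MSOPT(B_i)\le \MSOPT$ still holds with the setups accounted for correctly. Additionally, one has to coordinate the doubling on the phase lengths with the doubling inside the offline algorithm (which itself performs a binary search over candidate makespans), so that the two sources of constant slack compose into a single factor that does not exceed $4$ in the limit. Verifying these two points carefully, and then applying the offline algorithm of Theorem~4.2 as a black box within each phase, completes the proof.
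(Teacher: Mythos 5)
Your proposal takes a genuinely different route from the paper and contains a real gap. The paper does not use doubling phases anchored at absolute times $\tau_{i+1}=2\tau_i$; it uses the completion-time batching of Shmoys, Wein and Williamson: the batch $S_{i+1}$ consists of the jobs released while the schedule for $S_i$ is running, and it is started exactly when that schedule finishes at time $F_i$. This self-synchronization is what makes only the last two batches matter: one bounds $F_{l-1}\le F_{l-2}+(1+\varepsilon)(\MSOPT+p_{max}+s)$ and, since the optimal schedule cannot start any job of $S_l$ before time $F_{l-2}$, also $F_l-F_{l-1}\le(1+\varepsilon)(\MSOPT+p_{max}+s)-F_{l-2}$; adding the two telescopes to $F_l\le 2(1+\varepsilon)(\MSOPT+p_{max}+s)\le 4(1+\varepsilon)\MSOPT$, using $s+p_{max}\le\MSOPT$.

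In your scheme the batches are not synchronized with the processing, and this breaks the analysis. The schedule for $B_i$ has length up to $\alpha\,\MSOPT(B_i)$, and $\MSOPT(B_i)$ can be as large as $\MSOPT$ even when $\tau_i$ is tiny (all jobs may be released near time $0$), so ``$B_i$ fits within phase $i$'' fails for every phase with $\tau_i<\alpha\MSOPT$; the batches then serialize and the last batch does not start at $\tau_i$. For instance, take $m$ jobs of size $P$, each forming its own class, released at times $\delta,2\delta,4\delta,\dots,2^{m-1}\delta$ for tiny $\delta$ so that each lands in a different phase: the optimum is roughly $s+P$, while executing the singleton batches one after another takes roughly $m\,\alpha(s+P)$. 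The step ``$\tau_i+\alpha\MSOPT\le 2\tau_i$'' also silently assumes $\tau_i\ge\alpha\MSOPT$, which is exactly the regime that need not occur. The standard repair --- doubling an estimate $T$ of $\MSOPT$ rather than absolute time, and scheduling all pending jobs in each phase --- restores correctness but yields a completion time of about $\sum_j\alpha T_j\le 2\alpha T_{j^*}\le 4\alpha\MSOPT=6\MSOPT$ for $\alpha=\nicefrac{3}{2}$, not $4$; this loss is a constant factor, not the $(1+O(\varepsilon'))$ overhead you claim can be absorbed. To reach $4$ you need the completion-time batching and the telescoping argument above. (Your observation that $\MSOPT(B_i)\le\MSOPT$, by restricting the optimal schedule with its setups intact, is correct; the extra $+s$ in the paper's bound arises in the finer step where one restricts the optimal schedule to the time window after $F_{l-2}$ and may have to repeat a setup that the optimum performed earlier.)
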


\begin{proof}
Although the proof is pretty much the same as that given in \cite{onoff91}, for the sake of completeness we state it again. 
Let $0$ be the point in time where the first jobs arrive and call this set of jobs $S_0$. 
We apply our approximation algorithm and obtain a schedule for the jobs in $S_0$ and let $F_0$ be its makespan.
Next we consider those jobs arriving between time $0$ and $F_0$, call the set of them $S_1$ and compute a schedule for $S_1$ that begins at time $F_0$ and ends at time $F_1$.
Generally, we call the set of jobs released during the interval $(F_{i-1}, F_i]$ the set $S_{i+1}$ where $F_i$ is the point in time where the schedule for $S_i$ finishes. Then we schedule $S_{i+1}$ using our approximation algorithm. 

Let $F_l$ be the makespan of the entire schedule. 
We can determine an upper bound on $F_l$ as follows:
First, observe that $F_{l-1} \leq F_{l-2} + (1+\varepsilon)(\MSOPT + p_{max}+s)$ since the approximation quality of our algorithm makes shure that we need at most $(1+\varepsilon)(\MSOPT + p_{max} +s)$ time to process the jobs in $S_{l-1}$. 
Note that we may need the additional setup time $s$ because the optimal schedule might have already performed necessary setups earlier.
Second, consider the instance $I'$ obtained from $I$ by releasing the jobs of $S_l$ at time $F_{l-2}$. 
We observe that $F_l-F_{l-1} \leq (1+\varepsilon)(\MSOPT + p_{max}+s)-F_{l-2}$ by the approximation quality of our algorithm and the fact that also the optimal solution cannot schedule jobs of $S_l$ before $F_{l-2}$.
Putting both inequalities together we obtain $F_l \leq 2(1+\varepsilon)(\MSOPT + p_{max} +s) \leq 4(1+\varepsilon) \MSOPT$, proving the theorem. 
\end{proof}
It remains an interesting question for future work, whether the gap between the lower and the upper bound can be narrowed by more clever lower bound constructions and/or a strategy specifically tailored to the online scenario. 
\newpage

\end{document}